\newtheorem{theorem}{Theorem}
\newtheorem{lemma}[theorem]{Lemma}
\newtheorem{claim}[theorem]{Claim}
\newtheorem{conjecture}[theorem]{Conjecture}
\newtheorem{proposition}[theorem]{Proposition}
\newtheorem{corollary}[theorem]{Corollary}
\newtheorem{definition}{Definition}
\newtheorem{remark}{Remark}
\newcommand{\comment}[1]{}
\definecolor{Darkblue}{rgb}{0,0,0.4}
\definecolor{Brown}{cmyk}{0,0.81,1.,0.60}
\definecolor{Purple}{cmyk}{0.45,0.86,0,0}
\newcommand{\lref}[2][]{\hyperref[#2]{#1~\ref*{#2}}}
\renewenvironment{proof}{\begin{trivlist} \item[\hspace{\labelsep}{\bf
\noindent Proof.\/}] }{\qed\end{trivlist}}%
\newcounter{note}
\begin{document}

\newcommand{\eweight}[1]{\text{\scshape Exact-Weight-$#1$}}
\newcommand{\ew}[1]{\text{\scshape Exact-$#1$}}
\def\sat{\text{\scshape 3-SAT}}
\def\ews{\text{exact-weight}}
\def\ksumn{\text{\scshape $k$-sum$^n$}}
\def\ksumN{\text{\scshape $k$-sum$^N$}}
\newcommand{\hpg}[1]{\text{$#1$-partite graph}}
\newcommand{\ksum}[1]{\text{\scshape $#1$-sum}}
\newcommand{\kxor}[1]{\text{\scshape $#1$-xor}}
\newcommand{\tableksum}[1]{\text{\scshape \text{$#1$}-sum}}
\newcommand{\ksol}[1]{\text{\text{\text{$#1$}-solution}}}
\newcommand{\convksum}[1]{\text{\scshape convolution-\text{$#1$}-sum}}
\def\sn{\text{super-node}}
\def\se{\text{super-edge}}
\newcommand{\ksumconj}[1]{\text{$\ksum{#1}$ Conjecture}}

\def\sepalg{\text{separator algorithm}}

\def\kpsum{\text{\scshape ($k$\textrm{+}$1$)-sum}}
\def\kmsum{\text{\scshape ($k$\textrm{-}$1$)-sum}}
\def\convkpsum{\text{\scshape convolution-($k$\textrm{+}$1$)-sum}}

\newcommand{\kmatching}[1]{{#1 { \text{\scshape\text{-matching}}}}}
\newcommand{\kstar}[1]{{#1 { \text{\scshape\text{-star}}}}}
\newcommand{\kpath}[1]{{#1 { \text{\scshape\text{-path}}}}}
\newcommand{\kcycle}[1]{{#1 { \text{\scshape\text{-cycle}}}}}
\newcommand{\kclique}[1]{{#1 { \text{\scshape -clique}}}}
\def\triangle{ { {\text{\scshape $3$-clique}}}}
\def\VI{\text{\scshape VI}}

\newcommand{\kmpath}{{{ \text{\scshape\text{($k$\textrm{-}$1$)-path}}}}}

\def \vertminor{\text{vertex-minor}} 
\def\vm{\leq_{\mathsf{vm}}}

\def\threematching{\kmatching{3}}
\def\fivepath{\kpath{5}}

\def\patrascu{\text{P{\v a}tra{\c s}cu}}

\def\P{P}

\newcommand{\ewtri}{\text{\scshape Exact-Weight-Triangle}}
\newcommand{\apsp}{\text{\scshape All-Pairs Shortest Paths}}

\def\legal{\text{Legal}}
\def\exactw{\text{Exact-Weight}}
\def\inject{\text{Injective}}
\def\wellf{\text{Well-Formed}}

\setcounter{tocdepth}{4}

\def\index{\mathsf{index}}
\def\poly{\mathsf{poly}}
\def\yes{\textsf{YES}}
\def\no{\textsf{NO}}

\newcommand{\hsubv}[2]{ { v_{{#1}, {#2}_{#1}} } } 
\newcommand{\hsub}[3]{ \{ \hsubv{#1}{#2} \}_{{#1} \in [{#3}]} } 
\newcommand{\hsubs}[3]{ \{ \hsubv{#1}{#2}
	\}_{h_{#1} \in {#3}} } 
\newcommand{\hsubsia}[1]{ \hsubs{i}{a}{#1} } 

\newcommand{\ksols}[3]{ \{ x_{{#1}, {#2}_{#1}} \}_{{#1} \in [{#3}]} }
\newcommand{\ksolsia}[1]{ \ksols{i}{a}{#1} }

\newcommand{\ksolsuper}[4]{ \{ x^{(#1)}_{{#2}, {#3}_{#2}} \}_{{#2} \in [{#4}]} }

\newcommand{\rto}[2]{\ensuremath{\tensor[_{#1}]{\leq}{_{#2}}}}
\newcommand{\rtont}[2]{\ensuremath{\tensor[_{#1}]{\lneq}{_{#2}}}}
\newcommand{\equivto}[2]{\ensuremath{\tensor[_{#1}]{\equiv}{_{#2}}}}
\newcommand{\cftwo}[1]{\left\lceil \frac{#1}{2} \right\rceil}
\newcommand{\ctwo}[1]{\left\lceil #1/2 \right\rceil}
\newcommand{\fftwo}[1]{\left\lfloor \frac{#1}{2} \right\rfloor}
\newcommand{\ftwo}[1]{\left\lfloor #1/2 \right\rfloor}

\def\kone{k^{(1)}}
\def\ktwo{k^{(2)}}
\def\khat{{\hat{k}}}

\def\polylog{\textsf{polylog}}

\def\ith{i^{th}}
\def\jth{j^{th}}
\def\kth{k^{th}}

\def\argmin{\textsf{argmin}}
\def\sepset{\mathcal{S}}

\def\Ot{\tilde{O}}
\def\ot{\tilde{o}}
\def\Omegat{\tilde{\Omega}}

\renewcommand{\arraystretch}{1.2}
\def\vspacer{\phantom{}}
\def\mrany{\multirow{2}{*}}

\renewcommand{\paragraph}[1]{ \vspace{0.175in} {\noindent \bf #1} }

\title{Exact Weight Subgraphs and the $k$-Sum Conjecture}

\urldef{\mailaa}\path|abboud@cs.stanford.edu|
\urldef{\mailkl}\path|klewi@cs.stanford.edu|


\author{Amir Abboud\thanks{Computer Science Department, Stanford University. 
\mailaa. This work was done while the author was supported by NSF grant CCF-1212372.} \and Kevin Lewi\thanks{Computer Science Department, Stanford 
University. \mailkl}}

\date{}

\maketitle

\setlength{\absleftindent}{8mm}
\setlength{\absrightindent}{8mm}

\begin{abstract}
	{ \small

	We consider the $\eweight{H}$ problem of finding a (not necessarily induced) 
	subgraph $H$ of weight $0$ in an edge-weighted graph $G$. We show that for 
	every $H$, the complexity of this problem is strongly related to that of the 
	infamous $\ksum{k}$ problem. In particular, we show that under the $\ksum{k}$ 
	Conjecture, we can achieve tight upper and lower bounds for the $\eweight{H}$ 
	problem for various subgraphs $H$ such as matching, star, path, and cycle.
	
	One interesting consequence is that improving on the $O(n^3)$ upper bound for 
	$\eweight{\kpath{4}}$ or $\eweight{\kpath{5}}$ will imply improved algorithms 
	for $\ksum{3}$, $\ksum{5}$, $\apsp$ and other fundamental problems. This is in 
	sharp contrast to the minimum-weight and (unweighted) detection versions, 
	which can be solved easily in time $O(n^2)$.
	We also show that a faster algorithm for any of the following three problems 
	would yield faster algorithms for the others: $\ksum{3}$, 
	$\eweight{\kmatching{3}}$, and $\eweight{\kstar{3}}$.
	}
\end{abstract}

\section{Introduction}

Two fundamental problems that have been extensively studied separately by 
different research communities for many years are the $\ksum{k}$ problem and the 
problem of finding subgraphs of a certain form in a graph. We investigate the 
relationships between these problems and show tight connections between 
$\ksum{k}$ and the ``exact-weight'' version of the subgraph finding problem.

The $\ksum{k}$ problem is the parameterized version of the well known 
NP-complete problem \text{\scshape subset-sum}, and it asks if in a set of $n$ 
integers, there is a subset of size $k$ whose integers sum to $0$. This problem 
can be solved easily in time $O(n^{\ctwo{k}})$, and Baran, Demaine, and 
$\patrascu$~\cite{BDP08} show how the $\ksum{3}$ problem can be solved in time 
$O(n^2 / \log^2 n)$ using certain hashing techniques.
However, it has been a longstanding open problem to solve $\ksum{k}$ for 
\emph{some} $k$ in time $O(n^{\ctwo{k} - \epsilon})$ for some $\epsilon > 0$.
 In certain restricted models of computation, an $\Omega(n^{\ctwo{k}})$ lower 
 bound has been established initially by Erickson~\cite{Eri95} and later 
 generalized by Ailon and Chazelle~\cite{AC05}, and recently, $\patrascu$ and 
 Williams~\cite{PW10} show that $n^{o(k)}$ time algorithms for all $k$ would 
 refute the Exponential Time Hypothesis.
 The literature seems to suggest the following hypothesis, which we call the $\ksumconj{k}$:
 
\begin{conjecture}[The $\ksumconj{k}$]
	There does not exist a $k \geq 2$, an $\varepsilon > 0$, and a randomized 
	algorithm that succeeds (with high probability) in solving $\ksum{k}$ in time 
	$O(n^{\cftwo{k}-\varepsilon})$.
\end{conjecture}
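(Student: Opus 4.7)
The statement is a conjecture rather than a theorem, so the honest answer is that a full proof is beyond current techniques — a proposal must therefore describe what one could \emph{conditionally} establish and what evidence would be assembled in its favor. My plan would be to proceed in three stages: (i) collect the known unconditional lower bounds in restricted models, (ii) lift these as far as possible toward general models, and (iii) give a formal conditional derivation from a standard hardness assumption.

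For stage (i), I would start from Erickson's $\Omega(n^{\lceil k/2 \rceil})$ lower bound in the $k$-linear decision tree model and its generalization by Ailon and Chazelle to the $(2k-2)$-linear decision tree model. These bounds already match the conjecture tightly in their respective models, so the task is to argue that the ``real'' algorithmic difficulty of $\ksum{k}$ is captured by the decision-tree complexity of the underlying arithmetic. For stage (ii), the plan is to attempt to translate these decision-tree lower bounds to the word-RAM model for integer inputs, by showing that any sub-$n^{\lceil k/2 \rceil - \varepsilon}$ RAM algorithm would yield a correspondingly shallow decision tree after simulating the bit operations and branches. This stage is where I would expect the argument to become speculative, because word-RAM algorithms can exploit hashing (as in Baran--Demaine--P{\v a}tra{\c s}cu, which already saves polylogarithmic factors) in ways that escape linear decision-tree accounting.

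For stage (iii) I would fall back to a conditional statement: assuming the Exponential Time Hypothesis, invoke the P{\v a}tra{\c s}cu--Williams reduction to rule out $n^{o(k)}$-time algorithms, which justifies at least the qualitative form of the conjecture (no algorithm shaves a super-constant factor in the exponent). To strengthen this toward the quantitative $n^{\lceil k/2 \rceil - \varepsilon}$ threshold, I would look for a fine-grained reduction from a canonical hard problem (e.g.\ $\sat$ with a carefully chosen parameter, or another $\ksum{}$-type problem) that preserves the split-and-meet structure responsible for the $\lceil k/2 \rceil$ exponent.

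The main obstacle, and the reason this must remain a conjecture, is that we have no technique for proving super-linear unconditional time lower bounds in a general model of computation. Every attempt above either gives up something (restricting to a weaker model, or assuming ETH/SETH-type hypotheses) or encounters the barrier that hashing-based and algebraic tricks — precisely those exploited in~\cite{BDP08} — could in principle yield a polynomial-factor improvement that no current lower-bound framework would rule out. Thus my proposal is honest about its limits: it can formalize the \emph{consistency} of the conjecture with known lower bounds and derive its qualitative form from ETH, but the tight $\lceil k/2 \rceil$ exponent is precisely where the open problem lies.
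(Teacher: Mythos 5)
You correctly identify that this statement is a conjecture with no proof in the paper; the paper itself offers only the same supporting evidence you cite (the Erickson and Ailon--Chazelle lower bounds in restricted decision-tree models, and the $\patrascu$--Williams connection to the Exponential Time Hypothesis ruling out $n^{o(k)}$ algorithms). Your survey of the evidence and the barriers matches the paper's framing, so there is nothing to correct.
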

 
The presumed difficulty of solving $\ksum{k}$ in time 
$O(n^{\ctwo{k}-\varepsilon})$ for any $\varepsilon>0$ has been the 
basis of many conditional lower bounds for problems in computational geometry. 
The $k=3$ case has re
ceived even more attention, and proving $\ksum{3}$-hardness 
has become common practice in the computational geometry literature. In a recent 
line of work, $\patrascu$ \cite{Pat10}, Vassilevska and Williams \cite{VW09}, 
and Jafargholi and Viola \cite{JV13} show conditional hardness based on 
$\ksum{3}$ for problems in data structures and triangle problems in graphs.


The problem of determining whether a weighted or unweighted $n$-node graph has a 
subgraph that is isomorphic to a fixed $k$ node graph $H$ with some properties 
has been well-studied in the past \cite{NP85,KKM00,EG04}. There has been much 
work on detection and counting copies of $H$ in graphs, the problem of listing 
all such copies of $H$, finding the minimum-weight copy of $H$, etc. 
\cite{VW09,KLL11}. Considering these problems for restricted types of subgraphs 
$H$ has received further attention, such as for subgraphs $H$ with large 
indepedent sets, or with bounded treewidth, and various other structures 
\cite{VW09,Wil09,KLL11,FLRSR12}. In this work, we focus on the following 
subgraph finding problem.

\begin{definition}[The ${\eweight{H}}$ Problem]
	Given an edge-weighted graph $G$, does there exist a (not necessarily induced) 
	subgraph isomorphic to $H$ such that the sum of its edge weights equals a 
	given target value $t$?\footnote{We can assume, without loss of generality, 
	that the target value is always $0$ and that $H$ has no isolated vertices.}
\end{definition}

No non-trivial algorithms were known for this problem. Theoretical evidence for 
the hardness of this problem was given in \cite{VW09}, where the authors prove 
that for any $H$ of size $k$, an algorithm for the exact-weight problem can give 
an algorithm for the minimum-weight problem with an overhead that is only $O(2^k 
\cdot \log M)$, when the weights of the edges are integers in the range 
$[-M,M]$. They also show that improving on the trivial $O(n^3)$ upper bound for $\eweight{\triangle}$ to $O(n^{3-\varepsilon})$ for any $\varepsilon>0$ would not only imply an $\Ot(n^{3-\varepsilon})$ algorithm\footnote{In our bounds, $k$ is treated as a constant. The notation $\Ot(f(n))$ will hide $\polylog(n)$ factors.}
 for the minimum-weight $\kclique{3}$ problem, which from \cite{WW10} 
is in turn known to imply faster algorithms for the canonical $\apsp$ problem, but also 
an $O(n^{2-\varepsilon'})$ upper bound for the $\ksum{3}$ problem, for some $\varepsilon'>0$. They give additional 
evidence for the hardness of the exact-weight problem by proving that faster 
than trivial algorithms for the $\kclique{k}$ problem will break certain 
cryptographic assumptions.

Aside from the aforementioned reduction from $\ksum{3}$ to 
$\eweight{\triangle}$, few other connections between $\ksum{k}$ and subgraph 
problems were known. The standard reduction from Downey and Fellows~\cite{DF95} 
gives a way to reduce the unweighted $\kclique{k}$ detection problem to 
$\ksum{\binom{k}{2}}$ on $n^2$ numbers. Also, in \cite{Pat10} and  \cite{JV13}, 
strong connections were shown between the $\ksum{3}$ problem (or, the similar 
$\kxor{3}$ problem) and listing triangles in unweighted graphs.


\subsection{Our Results}

In this work, we study the exact-weight subgraph problem and its connections to 
$\ksum{k}$. We show three types of reductions: $\ksum{k}$ to subgraph problems, 
subgraphs to other subgraphs, and subgraphs to $\ksum{k}$. These results give 
conditional lower bounds that can be viewed as showing hardness either for 
$\ksum{k}$ or for the subgraph problems. We focus on showing implications of the 
$\ksumconj{k}$ and therefore view the first two kinds as a source for 
conditional lower bounds for $\eweight{H}$, while we view the last kind as 
algorithms for solving the problem. Our results are summarized in Table~1 and 
Figure~1, and are discussed below.

\paragraph{Hardness.} By embedding the numbers of the $\ksum{k}$ problem into 
the edge weights of the exact-weight subgraph problem, using different encodings 
depending on the structure of the subgraph, we prove four reductions that are 
summarized in Theorem~\ref{thm1}:
\begin{theorem} \label{thm1}
Let $k\geq 3$. If for all $\varepsilon>0$, $\ksum{k}$ cannot be solved in time 
$O(n^{\ctwo{k} - \varepsilon})$, then none of the following 
problems\footnote{$\kmatching{k}$ denotes the $k$-edge matching on $2k$ nodes. 
$\kstar{k}$ denotes the $k$-edge star on $k+1$ nodes. $\kpath{k}$ denotes the 
$k$-node path on $k$-$1$ edges.} can be solved in time $O(n^{\ctwo{k}-\delta})$, 
for any $\delta>0$:
\begin{itemize}
\item $\eweight{H}$ on a graph on $n$ nodes, for any subgraph $H$ on $k$ nodes.
\item $\eweight{\kmatching{k}}$ on a graph on $\sqrt{n}$ nodes.
\item $\eweight{\kstar{k}}$ on a graph on $n^{(1-1/k)}$ nodes.
\item $\eweight{\kmpath}$ on a graph on $n$ nodes.
\end{itemize}
\end{theorem}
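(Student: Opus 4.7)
The plan is to prove each of the four items by an explicit reduction from $\ksum{k}$ on $n$ numbers to the corresponding $\eweight{H}$ problem. One common weighting scheme underlies all four: \emph{degree normalization.} Given numbers $a_1,\ldots,a_n$, we build an edge-weighted host graph $G$ in which copies of $H$ correspond to $k$-tuples of chosen numbers, and the total edge-weight of a copy equals the sum of the chosen numbers. If vertex $u\in V(H)$ (of degree $d_u\ge 1$, by the ``no isolated vertices'' footnote) is to ``represent'' number-index $p_u$, we make every edge of $H$ incident to $u$ contribute $a_{p_u}/d_u$; summing over the $|E(H)|$ edges of $H$ then recovers $\sum_u a_{p_u}$ exactly.

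For item (i) (any $H$ on $k$ nodes, graph on $n$ nodes), we take $k$ color classes, one per vertex of $H$, each indexed by $[n]$. For every edge $\{u,v\}\in E(H)$ and every $(p,q)\in[n]^2$ we place an edge of weight $a_p/\deg_H(u)+a_q/\deg_H(v)$ between the $p$-th node of class $u$ and the $q$-th node of class $v$; color-preserving copies of $H$ are then in bijection with $[n]^k$ and have weight equal to the corresponding $k$-sum, so an $O(n^{\ctwo{k}-\delta})$ algorithm for $\eweight{H}$ would refute the $\ksumconj{k}$ on a graph of $kn=O(n)$ nodes. For item (ii) ($\kmatching{k}$ on $\sqrt n$ nodes) we pass to the structured version of $\ksum{k}$ ($k$ sets of $n$ numbers each) and build $2k$ parts of $\sqrt n$ nodes each: for each $i\in[k]$ the $n$ edges of the bipartite graph between parts $2i{-}1$ and $2i$ are indexed by $[\sqrt n]^2$ and carry the numbers of the $i$-th set. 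To force a $k$-matching to pick exactly one edge per bipartite piece, we add an offset $M^i$ to every weight in piece $i$ (for $M>\sqrt n$) and shift the target by $\sum_i M^i$; uniqueness of base-$M$ representations forces the canonical ``one edge per piece'' pattern, which encodes a structured $\ksum{k}$ solution on a graph with $O(\sqrt n)$ nodes.

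For items (iii) and (iv) the target subgraph is more constrained---a shared center in the star, only $k-1$ vertices in the path---and the template requires refinements. For $\kstar{k}$ on $n^{1-1/k}$ nodes, the plan is to index each set's $n$ numbers as a grid $[n^{1-1/k}]\times[n^{1/k}]$ and take $n^{1-1/k}$ ``center'' nodes (one per row) plus $k$ leaf groups of $n^{1/k}$ leaves each (one per column), with the edge from center $c$ to the $\ell$-th leaf of group $i$ carrying $a^{(i)}_{c,\ell}$; this directly encodes only the \emph{synchronized} version of $\ksum{k}$ (all chosen numbers share the row), and lifting to the general version is done via an offset/color-coding step in the spirit of item (ii), arranged to have only $n^{o(1)}$ overhead. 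For $\kmpath$ on $n$ nodes, the path has only $k-1$ vertices but must encode $k$ numbers; the plan is to ``overload'' one vertex of the path so its two incident edges collectively encode a pair of numbers (via an auxiliary per-node label), while keeping the host graph on $O(n)$ nodes and recalibrating the degree normalization so that the path weight is still $\sum_{i=1}^k a_{p_i}$.

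The principal obstacle will be item (iii). Unlike the matching's disjoint pieces or the localized vertex overloading of the path, the star's single center is incident to all $k$ leaf-edges, so any reduction from \emph{general} $\ksum{k}$ must decouple $k$ independent leaf choices while living on only $n^{1-1/k}$ nodes. Making the lift from the synchronized to the general version of $\ksum{k}$ go through with at most subpolynomial overhead---so the lower bound holds for every $\delta>0$, not only above some threshold---is where the argument will need the most care.
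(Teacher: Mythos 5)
Your items (i) and (ii) are essentially the paper's reductions (the paper places the numbers on node weights, resp.\ on the edges of each super-edge of an $H$-partite host, and forces the ``one vertex per class / one edge per piece'' structure through a generic $\ew{H}\to\eweight{H}$ conversion with per-super-edge offsets, which plays exactly the role of your $M^i$ shifts). The difficulty is concentrated in items (iii) and (iv), and in both cases your plan is missing the one ingredient the paper actually relies on: Dietzfelbinger's \emph{almost-linear} hashing.

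For (iii), the synchronized-to-general lift cannot be done by ``an offset/color-coding step in the spirit of item (ii).'' Offsets can force a star copy to use one edge from each leaf group, but they cannot shrink the set of row-combinations you must cover: with an arbitrary grid $[n^{1-1/k}]\times[n^{1/k}]$, a general zero-sum solution draws its $k$ numbers from $k$ unrelated rows, and there are $n^{(1-1/k)k}=n^{k-1}$ such combinations, so enumerating one center per combination is hopeless and any scheme with one center per row is incomplete. The paper instead hashes each list into $t=n^{1/k}$ buckets of size $O(n^{1-1/k})$ with a hash family that is linear up to $k$ additive shifts; linearity guarantees that in any zero-sum solution the bucket of the $k$-th number is determined, up to $k$ choices, by the buckets of the other $k-1$, so only $k\,t^{k-1}=O(n^{1-1/k})$ bucket-tuples (= star centers) need to be instantiated, each carrying a full $\tableksum{k}$ instance on $O(n^{1-1/k})$ numbers. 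That is where the subpolynomial overhead comes from, and nothing in your toolkit substitutes for it.

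For (iv) there is a counting obstruction that rules out your template outright: $\kmpath$ has $k-1$ vertices, so in a host graph on $O(n)$ nodes there are only $n^{k-1}$ candidate copies, while $\ksum{k}$ has $n^{k}$ candidate solutions; no single-instance weighting (degree normalization, vertex overloading, auxiliary labels) can be complete unless the overloaded class has $n^{2}$ nodes, which destroys the $O(n)$ host size. The paper first reduces $\ksum{k}$ to $\convksum{k}$ (again via the almost-linear hashing), whose constraint $a_k=a_1+\cdots+a_{k-1}$ cuts the candidates to $n^{k-1}$, and then encodes vertex labels as partial sums along the path so that each edge carries the number indexed by the \emph{difference} of its endpoints' labels, with the two end edges absorbing two additional absolutely-indexed numbers. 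Without this convolution step (or an equivalent compression of the solution space) the path reduction cannot be made to work.
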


An immediate implication of Theorem~\ref{thm1} is that neither $\kstar{3}$ 
can be solved in time $O(n^{3-\varepsilon})$, nor can $\kmatching{3}$ be solved in time
$O(n^{4-\varepsilon})$ for some $\varepsilon>0$, unless $\ksum{3}$ can be solved in time $O(n^{2-\varepsilon'})$ for some $\varepsilon'>0$. We later show 
that an $O(n^{2-\varepsilon})$ algorithm for $\ksum{3}$ for some $\varepsilon>0$ will imply both an $\Ot(n^{3-\varepsilon})$ 
algorithm for $\kstar{3}$ and an  $\Ot(n^{4-2\varepsilon})$ algorithm for $\kmatching{3}$. In 
other words, either \emph{all} of the following three statements are true, or 
\emph{none} of them are:
\begin{itemize}
	\item $\ksum{3}$ can be solved in time $O(n^{2-\varepsilon})$ for some $\varepsilon>0$.
	\item $\kstar{3}$ can be solved in time $O(n^{3-\varepsilon})$ for some $\varepsilon>0$.
	\item $\kmatching{3}$ can be solved in time $O(n^{4-\varepsilon})$ for some $\varepsilon>0$.
\end{itemize}

From \cite{VW09}, we already know that solving $\triangle$ in time $O(n^{3-\varepsilon})$ for some $\varepsilon>0$ 
implies that $\ksum{3}$ can be solved in time $O(n^{2-\varepsilon})$ for some $\varepsilon>0$. By Theorem~\ref{thm1}, 
this would imply faster algorithms for $\kstar{3}$ and $\kmatching{3}$ as well.

Another corollary of Theorem~\ref{thm1} is the fact that $\kpath{4}$ cannot be 
solved in time $O(n^{3-\varepsilon})$ for some $\varepsilon>0$ unless $\ksum{5}$ can be solved in time $O(n^{3-\varepsilon'})$ for some $\varepsilon'>0$. 
This is in sharp contrast to the unweighted version (and the min-weight version) 
of $\kpath{4}$, which can both be solved easily in time $O(n^2)$.

Theorem~\ref{thm1} shows that the $\ksum{k}$ problem can be reduced to the 
$\eweight{H}$ problem for various types of subgraphs $H$, and as we noted, this 
implies connections between the $\ews$ problem for different subgraphs. It is 
natural to ask if for any other subgraphs the $\ews$ problems can be related to 
one another. We will answer this question in the affirmative---in particular, we 
show a tight reduction from $\triangle$ to $\kpath{4}$.

To get this result, we use the edge weights to encode information about the 
nodes in order to prove a reduction from $\eweight{H_1}$ to $\eweight{H_2}$, 
where $H_1$ is what we refer to as a ``$\vertminor$'' of $H_2$. Informally, a 
$\vertminor$ of a graph is one that is obtained by edge deletions and node 
identifications (contractions) for arbitrary pairs of nodes of the original graph (see 
Section~\ref{sec:vm} for a formal definition). For example, the triangle 
subgraph is a $\vertminor$ of the path on four nodes, which is itself a 
$\vertminor$ of the cycle on four nodes.

\begin{theorem} \label{thm2}
	Let $H_1,H_2$ be subgraphs such that $H_1$ is a $\vertminor$ of $H_2$. For any 
	$\alpha \geq 2$, if $\eweight{H_2}$ can be solved in time $O(n^\alpha)$, then 
	$\eweight{H_1}$ can be solved in time $\tilde{O}(n^\alpha)$.
\end{theorem}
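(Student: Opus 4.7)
The plan is to reduce $\eweight{H_1}$ on an $n$-vertex input graph $G_1$ to $\eweight{H_2}$ on a constructed graph $G_2$ with $O(n)$ vertices. Fix a surjection $\phi:V(H_2)\to V(H_1)$ witnessing that $H_1$ is a vertex-minor of $H_2$, i.e., $H_1$ is obtained from $H_2$ by identifying the vertices inside each fiber $\phi^{-1}(v)$ and deleting some edges. Label $V(G_1)=\{1,\ldots,n\}$, take $V(G_2)=V(G_1)\times V(H_2)$ partitioned into $|V(H_2)|$ layers $L_u=V(G_1)\times\{u\}$, so that any copy of $H_2$ in $G_2$ is determined by a choice of $x_u\in V(G_1)$ for each $u\in V(H_2)$. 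For each edge $(v,v')\in E(H_1)$ pick a representative edge $\{u,u'\}\in E(H_2)$ with $\phi(u)=v$ and $\phi(u')=v'$; on this representative include the $G_2$-edge $\{(x,u),(y,u')\}$ only when $(x,y)\in E(G_1)$, with weight $w_{G_1}(x,y)$ plus an identity-enforcement term. On every other $H_2$-edge (non-representative duplicate, intra-fiber, or deleted-in-$H_1$), include all $G_2$-edges between the two layers, each carrying only an identity-enforcement term.

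The main obstacle is designing the identity-enforcement terms so that the total weight of a copy of $H_2$ is zero iff $x_u=x_{u'}$ for every $u,u'$ with $\phi(u)=\phi(u')$ (so the copy descends to a well-defined vertex assignment for $H_1$) and the resulting $H_1$-copy has weight zero in $G_1$. A single integer must certify all $N=|V(H_2)|-|V(H_1)|$ identity equalities at once, which I handle by positional encoding. For each fiber $\phi^{-1}(v)$ pick a root $r_v$; enumerate the non-root members as $u_1,\ldots,u_N$ with host fibers indexed by $v_1,\ldots,v_N$; and aim for the identity-term sum to equal
\[
L \;=\; \sum_{i=1}^{N} W^{i}\,\bigl(x_{u_i}-x_{r_{v_i}}\bigr),
\]
where $W$ is chosen larger than $2n$ and larger than twice the maximum representative contribution $|E(H_1)|\cdot M$ (with $M$ the maximum absolute $G_1$-weight). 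Since each ``digit'' $x_{u_i}-x_{r_{v_i}}$ lies in $[-(n-1),n-1]$, a base-$W$ uniqueness argument shows the total weight $T+L$ is zero iff every digit vanishes and the representative contribution $T$ vanishes. Realizing $L$ as an edge-weight sum is easy: the target coefficient of $x_u$ in $L$ is a known $\pm W^{i}$ (summed at roots), and because $H_2$ has no isolated vertices one can distribute it freely among the edges of $H_2$ incident to $u$, so each edge $e=\{u,u'\}$ ends up with an identity weight of the form $\alpha_{e,u}\,x_u+\alpha_{e,u'}\,x_{u'}$.

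Equivalence then follows routinely. Given a zero-weight $H_1$-copy $(y_v)_v$ in $G_1$, setting $x_u=y_{\phi(u)}$ yields a valid $H_2$-copy in $G_2$ with $L=0$ and representative sum equal to the $H_1$-weight, zero. Conversely, any zero-weight $H_2$-copy in $G_2$ must be fiber-consistent (else the $W^{\ge 1}$ part of $L$ dominates $T$) and its induced $H_1$-copy has weight zero. Since $|V(G_2)|=|V(H_2)|\cdot n=O(n)$ and every edge weight is bounded by $W^{N}=\poly(n,M)$, i.e., $O(\log n)$ extra bits, invoking the assumed $O(n^\alpha)$ algorithm for $\eweight{H_2}$ on $G_2$ yields the claimed $\tilde{O}(n^\alpha)$ bound for $\eweight{H_1}$.
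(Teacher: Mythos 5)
Your core enforcement gadget is sound, and it is essentially a telescoped version of what the paper does: the paper handles one contraction at a time, adding $\pm a\cdot K$ to the weights of the two copies of a vertex with $K$ exceeding any achievable subgraph weight, whereas you stack all $N$ equality constraints into a single integer via base-$W$ digits. That part (existence of representative edges, the carry argument, distributing vertex coefficients onto incident edges of $H_2$) checks out. The problem is the packaging: you reduce directly to $\eweight{H_2}$ on the product graph $V(G_1)\times V(H_2)$, and two constraints you rely on implicitly are not actually enforced there.

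First, the claim that ``any copy of $H_2$ in $G_2$ is determined by a choice of $x_u\in V(G_1)$ for each $u\in V(H_2)$'' is false: a subgraph of $G_2$ isomorphic to $H_2$ need not place exactly one vertex in each layer. If $H_2$ is a $4$-cycle $u_1u_2u_3u_4$, then $G_2$ can contain a $4$-cycle lying entirely between the layers $L_{u_1}$ and $L_{u_2}$; such a copy uses the super-edge $(L_{u_1},L_{u_2})$ four times and the others not at all, so its weight bears no relation to your intended sum and can vanish spuriously. The paper forces layer-respecting copies with a separate gadget, namely the powers-of-$d$ target in the $\ew{H}\rightarrow\eweight{H}$ direction of Lemma~\ref{lem:ewhequiv}; you need something of that kind. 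Second, and more fundamentally, even a layer-respecting, fiber-consistent copy need not project to an \emph{injective} map $V(H_1)\to V(G_1)$: roots of distinct fibers may choose the same $G_1$-vertex, since only adjacent pairs of $H_1$-vertices are kept distinct by the representative edges. Concretely, take $H_1=H_2=\kmatching{2}$ (so $N=0$ and your gadget is vacuous) and let $G_1$ be the path on $\{a,b,c\}$ with $w(ab)=1$ and $w(bc)=-1$: then $\{(a,u_1),(b,u_2),(b,u_3),(c,u_4)\}$ is a zero-weight $\kmatching{2}$ in $G_2$, yet $G_1$ contains no $2$-matching at all. This disequality constraint cannot be absorbed into your single linear form; the paper sidesteps it by color-coding, i.e.\ by \emph{partitioning} $V(G_1)$ into the super-nodes rather than duplicating it, so that vertices drawn from distinct super-nodes are automatically distinct vertices of the original graph.
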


Therefore, Theorem~\ref{thm2} allows us to conclude that $\kcycle{4}$ cannot be 
solved in time $O(n^{3-\varepsilon})$ for some $\varepsilon>0$ unless $\kpath{4}$ can be solved in time $\Ot(n^{3-\varepsilon})$, 
which cannot happen unless $\triangle$ can be solved in time $\Ot(n^{3-\varepsilon})$. 

To complete the picture of relations between $3$-edge subgraphs, consider the 
subgraph composed of a 2-edge path along with another (disconnected) edge. We 
call this the ``$\VI$'' subgraph and we define the $\eweight{\VI}$ problem 
appropriately. Since the path on four nodes is a $\vertminor$ of $\VI$, we have 
that an $O(n^{3-\varepsilon})$ for some $\varepsilon>0$ algorithm for $\eweight{\VI}$ implies an $\Ot(n^{3-\varepsilon})$ algorithm 
for $\kpath{4}$. In Figure~1, we show this web of connections between the 
exact-weight $3$-edge subgraph problems and its connection to $\ksum{3}$, 
$\ksum{5}$, and $\apsp$. In fact, we will soon see that the conditional lower 
bounds we have established for these $3$-edge subgraph problems are all tight. 
Note that the detection and minimum-weight versions of some of these $3$-edge subgraph 
problems can all be solved much faster than $O(n^3)$ (in particular, $O(n^2)$), and yet 
such an algorithm for the exact-weight versions for \emph{any} of these problems 
will refute the $\ksum{3}$ Conjecture, the $\ksum{5}$ Conjecture, and lead to 
breakthrough improvements in algorithms for solving $\apsp$ and other important 
graph and matrix optimization problems (cf. \cite{WW10})!

Another $O(n^3)$ solvable problem is the $\eweight{\kpath{5}}$, and by noting 
that both $\kcycle{4}$ and $\VI$ are $\vertminor$s of $\kpath{5}$, we get that 
improved algorithms for $\kpath{5}$ will yield faster algorithms for all of the 
above problems. Moreover, from Theorem~\ref{thm1}, $\ksum{6}$ reduces to 
$\kpath{5}$. This established $\eweight{\kpath{5}}$ as the ``hardest'' of the $O(n^3)$ 
time problems that we consider.



We also note that Theorem~\ref{thm2} yields some interesting consequences under 
the assumption that the $\kclique{k}$ problem cannot be solved in time 
$O(n^{k-\varepsilon})$ for some $\varepsilon>0$. Theoretical evidence for this assumption was provided in 
\cite{VW09}, where they show how an $O(n^{k-\varepsilon})$ for some $\varepsilon>0$ time algorithm for 
$\eweight{\kclique{k}}$ yields a sub-exponential time algorithm for the 
multivariate quadratic equations problem, a problem whose hardness is assumed in 
post-quantum cryptography.

We note that the $4$-clique is a $\vertminor$ of the $8$-node path, and so by 
Theorem~\ref{thm2}, an $O(n^{4-\varepsilon})$ for some $\varepsilon>0$ algorithm for $\kpath{8}$ will yield a faster 
$\kclique{4}$ algorithm. Note that an $O(n^{5-\varepsilon})$ algorithm for $\kpath{8}$ 
already refutes the $\ksum{9}$ Conjecture. However, this by itself is not known 
to imply faster clique algorithms\footnote{It is not known whether the 
assumption that $\kclique{k}$ cannot be solved in time $O(n^{k-\varepsilon})$ for any $\varepsilon>0$ is stronger or 
weaker than the $\ksum{k}$ Conjecture.}. Also, there are other subgraphs for 
which one can only rule out $O(n^{\ctwo{k}-\varepsilon})$ for $\varepsilon>0$ upper bounds from the 
$\ksum{k}$ Conjecture, while assuming hardness for the $\kclique{k}$ problem and 
using Theorem~\ref{thm2}, much stronger lower bounds can be achieved.


\begin{figure}
	\label{fig:reductions}
	\centering
	\includegraphics[scale=0.8]{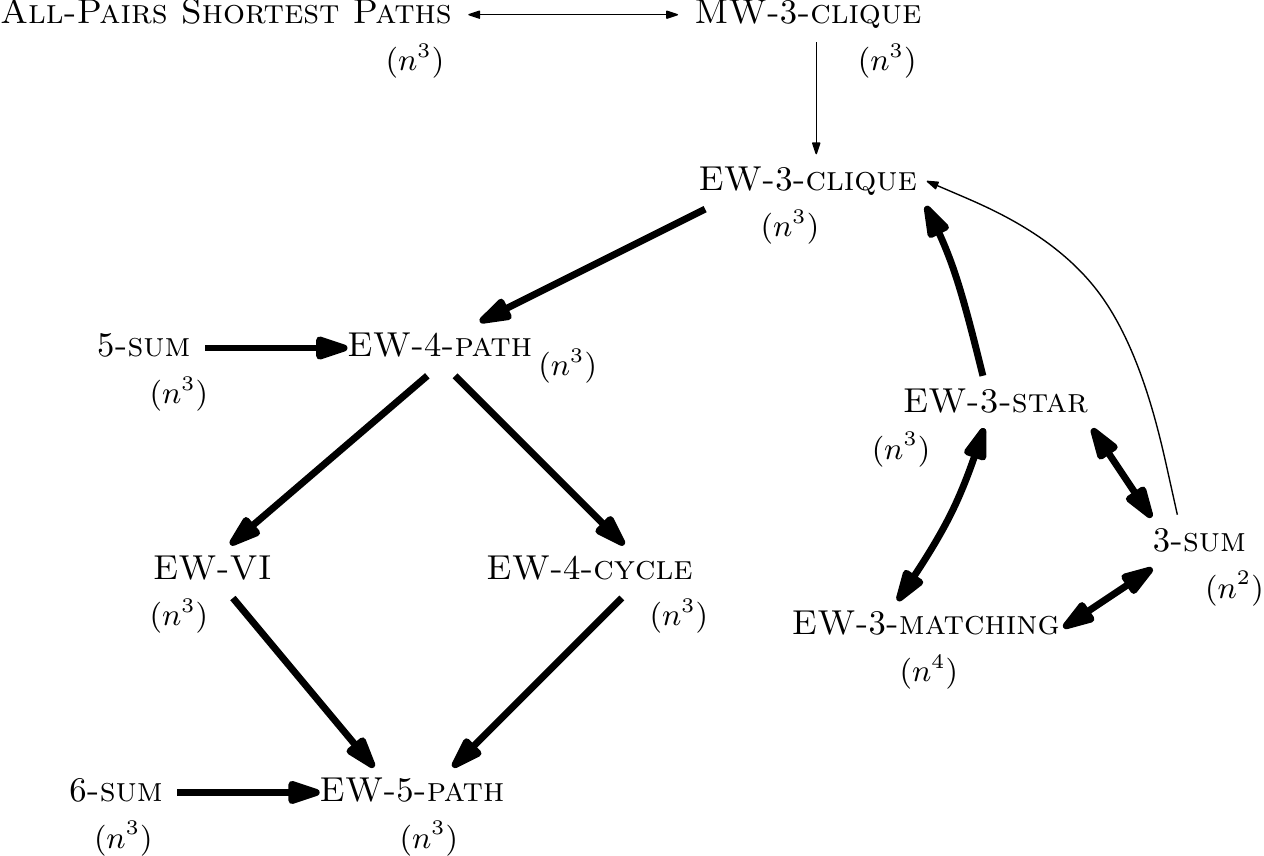}
	\caption{A diagram of the relationships between $\eweight{H}$ (denoted 
	${\textsc EW}$, for small subgraphs $H$) and other important problems. The 
	best known running times are given for each problem, and an arrow $A \to B$ 
	denotes that $A$ can be tightly reduced to $B$, in the sense that improving 
	the stated running time for $B$ will imply an improvement on the stated 
	running time for $A$. The reductions established in this work are displayed in 
	bold, the others are due to \cite{WW10}, \cite{VW09}. }
\end{figure}

\paragraph{Algorithms.} So far, our reductions only show one direction of the 
relationship between $\ksum{k}$ and the exact-weight subgraph problems. We now 
show how to \emph{use} $\ksum{k}$ to solve $\eweight{H}$, which will imply that 
many of our previous reductions are indeed tight. The technique for finding an 
$H$-subgraph is to enumerate over a set of $d$ smaller subgraphs that partition 
$H$ in a certain way. Then,  in order to determine whether the weights of these 
$d$ smaller subgraphs sum up to the target weight, we use $\ksum{d}$.
We say that $(S, H_1, \ldots, H_d)$ is a \emph{$d$-separator} of $H$ iff $S, 
H_1, \ldots, H_d$ partition $V(H)$ and there are no edges between a vertex in 
$H_i$ and a vertex in $H_j$ for any distinct $i,j \in [d]$.

\begin{theorem} \label{thm3}
	Let $(S, H_1, \ldots, H_d)$ be a $d$-separator of $H$. Then, $\eweight{H}$ can 
	be reduced to $\Ot(n^{|S|})$ instances of $\ksum{d}$ each on $\max\{n^{|H_1|}, 
	\ldots, n^{|H_d|} \}$ numbers.
\end{theorem}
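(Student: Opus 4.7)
The plan is to enumerate all mappings of the separator $S$ into $V(G)$, and for each such mapping, reduce the task of extending it to a full copy of $H$ into a single $\ksum{d}$ instance, one list per separated component $H_i$. We may assume the target weight is $0$.

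For each of the $n^{|S|}$ candidate mappings $\sigma : V(S) \to V(G)$, first verify that every edge of $H$ lying inside $S$ is realized in $G$ under $\sigma$; if so, let $w_S$ be the total weight of these edges. Then, for each $i \in [d]$ independently, build a list $L_i^\sigma$ by iterating over all $n^{|H_i|}$ candidate mappings $\tau_i : V(H_i) \to V(G)$. When $\tau_i$ realizes every edge of $H$ internal to $V(H_i)$ and every cross-edge between $V(H_i)$ and $V(S)$ (with respect to $\sigma$), record into $L_i^\sigma$ the total weight of those edges. Crucially, because $(S, H_1, \ldots, H_d)$ is a $d$-separator, there are no edges in $H$ between distinct $H_i$ and $H_j$, so the weight of any extension of $\sigma$ decomposes as $w_S + w_1 + \cdots + w_d$, where $w_i$ depends only on $\sigma$ and $\tau_i$. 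Consequently, the existence of an extension of total weight $0$ is equivalent to the existence of $w_i \in L_i^\sigma$ with $w_1 + \cdots + w_d = -w_S$, which is a $d$-list $\ksum{d}$ instance on lists of sizes at most $n^{|H_1|}, \ldots, n^{|H_d|}$. The standard prefix-tag trick collapses this into a single-list $\ksum{d}$ instance on $\max_i n^{|H_i|}$ numbers.

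The main obstacle is injectivity: a subgraph isomorphic to $H$ requires that the combined mapping $\sigma \cup \tau_1 \cup \cdots \cup \tau_d$ be injective, but the $\ksum{d}$ instance as described does not enforce that the $\tau_i$'s use vertices disjoint from one another and from the image of $\sigma$. I would resolve this via color-coding: independently color each vertex of $G$ uniformly at random with one of $|V(H)|$ colors, assign distinct colors to the vertices of $V(H)$, and restrict both $\sigma$ and the $\tau_i$'s to mappings that respect colors. Under this restriction any fixed injective copy of $H$ in $G$ survives with probability $|V(H)|^{-|V(H)|} = \Theta(1)$ (since $|V(H)|$ is constant), and color consistency automatically forces the $\tau_i$'s to be injective and pairwise disjoint as well as disjoint from $\sigma$. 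Repeating $O(\log n)$ times yields high success probability; the algorithm can then be derandomized using $(n, |V(H)|)$-perfect hash families at a $\polylog(n)$ overhead, which is absorbed into the $\Ot(n^{|S|})$ count of $\ksum{d}$ instances.
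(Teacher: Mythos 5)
Your proposal is correct and takes essentially the same approach as the paper: both enumerate the $n^{|S|}$ placements of $S$, use the absence of edges between distinct $H_i$ and $H_j$ to decompose the remaining weight into $d$ independent contributions forming a $\ksum{d}$ instance, and absorb the injectivity issue into the $\Ot(\cdot)$ factor via color-coding (the paper packages that color-coding as its separate $\eweight{H}$-to-$\ew{H}$ reduction on $H$-partite graphs, which is exactly where its $\Ot(n^{|S|})$ comes from). The only cosmetic difference is bookkeeping: the paper adds $w(\chi_S)$ into every list entry and sets the target to $(d-1)\cdot w(\chi_S)$, whereas you exclude it and target $-w_S$.
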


By using the known $\ksum{d}$ algorithms, Theorem~\ref{thm3} gives a non-trivial 
algorithm for exact-weight subgraph problems. The running time of the algorithm 
depends on the choice of the separator used for the reduction. We observe that 
the optimal running time can be achieved even when $d=2$, and can be identified 
(naively, in time $O(3^k)$) using the following expression. Let
\[ \gamma(H) = \min_{(S, H_1, H_2) \text{ is a $2$-separator}} \left\{ |S| + 
\max \left\{ |H_1|, |H_2| \right \} \right\}. \]

\begin{corollary} \label{cor4}
	$\eweight{H}$ can be solved in time $\Ot(n^{\gamma(H)})$.
\end{corollary}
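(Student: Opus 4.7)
The plan is to obtain $\eweight{H}$ in time $\Ot(n^{\gamma(H)})$ by directly instantiating Theorem~\ref{thm3} with $d=2$. First I would fix a 2-separator $(S, H_1, H_2)$ of $H$ that attains the minimum in the definition of $\gamma(H)$. Because $|V(H)| = k$ is treated as a constant, such an optimal 2-separator can be found by enumerating all partitions of $V(H)$ into three parts and checking the no-edge condition between the two sides, which takes $O(3^k)$ time and contributes only to the hidden constant.

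Next I would invoke Theorem~\ref{thm3} on this 2-separator. It reduces $\eweight{H}$ to $\Ot(n^{|S|})$ instances of $\ksum{2}$, each on lists of size $N := \max\{n^{|H_1|}, n^{|H_2|}\}$. Since $\ksum{2}$ on two lists of size $N$ is solvable in $O(N \log N) = \Ot(N)$ time (sort one list and binary-search each element of the other against the target), multiplying gives the overall running time
\[ \Ot\bigl( n^{|S|} \cdot N \bigr) \;=\; \Ot\bigl( n^{|S| + \max\{|H_1|,|H_2|\}} \bigr) \;=\; \Ot\bigl( n^{\gamma(H)} \bigr), \]
which is what is claimed.

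There is essentially no technical obstacle here, since the corollary is a direct combination of Theorem~\ref{thm3} with the textbook $\ksum{2}$ algorithm. The only mildly substantive point, which underlies the preceding remark that $d=2$ is already optimal within this framework, is that any $d$-separator $(S, H_1, \ldots, H_d)$ can be converted into a 2-separator of no worse cost by bundling the parts $H_1, \ldots, H_d$ into two groups of sizes $\ftwo{d}$ and $\ctwo{d}$: the bundled partition remains a 2-separator (no edges cross distinct $H_i$'s), and the maximum bundle size is bounded by $\ctwo{d} \cdot \max_i |H_i|$, which matches the $N^{\ctwo{d}}$ exponent of the best known $\ksum{d}$ algorithm. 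Hence nothing is lost by the restriction to $d=2$ in the definition of $\gamma(H)$.
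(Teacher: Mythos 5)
Your proof is correct and is essentially the paper's own argument: the paper proves Corollary~\ref{cor4} via its ``separator algorithm,'' which is exactly the $d=2$ instantiation of Theorem~\ref{thm3} combined with the standard near-linear-time $\ksum{2}$ routine, with the optimal separator found by the same $O(3^k)$ brute force. The only cosmetic difference is direction of presentation (the paper proves the $d=2$ case first and then generalizes to Theorem~\ref{thm3}, while you specialize the theorem), and your closing observation about bundling parts mirrors the paper's Remark~\ref{rem:sepalgopt}.
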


Corollary~\ref{cor4} yields the upper bounds that we claim in Figure~1 and 
Table~1. For example, to achieve the $O(n^{\ctwo{(k+1)}})$ time complexity for 
$\kpath{k}$, observe that we can choose the set containing just the ``middle'' 
node of the path to be $S$, so that the graph $H \setminus S$ is split into two 
disconnected halves $H_1$ and $H_2$, each of size at most $\ctwo{(k-1)}$. Note 
that this is the optimal choice of a separator, and so $\gamma(\kpath{k}) = 1 + 
\ctwo{(k-1)} = \ctwo{(k+1)}$.
It is interesting to note that this simple algorithm achieves running times that 
match many of our conditional lower bounds. This means that in many cases, 
improving on this algorithm will refute the $\ksum{k}$ Conjecture, and in fact, 
we are not aware of any subgraph for which a better running time is known. 

$\eweight{H}$ is solved most efficiently by our algorithm when $\gamma(H)$ is 
small, that is, subgraphs with small ``balanced'' separators. Two such cases are 
when $H$ has a large independent set and when $H$ has bounded treewidth. We show 
that $\eweight{H}$ can be solved in time $O(n^{k-\fftwo{s}})$, if $\alpha(H)=s$, 
and in time $O(n^{\frac{2}{3}\cdot k + tw(H)})$. Also, we observe that our 
algorithm can be modified slightly to get an algorithm for the minimization 
problem.

\begin{theorem}
\label{thm4}
	Let $H$ be a subgraph on $k$ nodes, with independent set of size $s$. Given a 
	graph $G$ on $n$ nodes with node and edge weights, the minimum total weight of 
	a (not necessarily induced) subgraph of $G$ that is isomorphic to $H$ can be 
	found in time $\Ot(n^{k-s+1})$.
\end{theorem}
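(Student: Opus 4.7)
The plan is to sharpen the separator scheme of Corollary~\ref{cor4} by exploiting the size-$s$ independent set $I \subseteq V(H)$ directly, rather than splitting it across two halves as a general $2$-separator would. Let $S = V(H) \setminus I$, so $|S| = k-s$ and every edge of $H$ has at least one endpoint in $S$. I would iterate over all $O(n^{k-s})$ injective mappings $\phi : S \to V(G)$, letting $W(\phi)$ be the sum of the node weights of $\phi(S)$ together with the edge weights in $G$ indexed by the edges of $H[S]$ under $\phi$, setting $W(\phi) = +\infty$ whenever a required edge of $H[S]$ is missing in $G$.

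Given $\phi$, the remaining task is to find an injective extension $\psi : I \to V(G) \setminus \phi(S)$ minimizing $W(\phi) + \sum_{v \in I} c_\phi(v, \psi(v))$, where $c_\phi(v, u)$ denotes the node weight of $u$ plus the sum of weights of the edges $(u, \phi(w))$ for $w \in N_H(v) \subseteq S$, set to $+\infty$ if any such edge is absent in $G$. Because $I$ is independent in $H$, there are no cross-terms: if injectivity were dropped, each coordinate could be minimized independently in $O(n)$ time, totaling $O(sn) = O(n)$ per $\phi$. To restore injectivity (and exclude the constant-size set $\phi(S)$) I would apply color-coding: fix an ordering $I = \{v_1, \ldots, v_s\}$, randomly partition $V(G)$ into classes $V_1, \ldots, V_s$, and restrict $\psi(v_i) \in V_i \setminus \phi(S)$. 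Under this restriction the coordinates decouple and the restricted minimum is $\sum_{i=1}^s \min_{u \in V_i \setminus \phi(S)} c_\phi(v_i, u)$, computable in $O(n)$ time. A fixed optimal placement matches the intended color pattern with probability $s!/s^s$, a positive constant since $s \leq k$ is constant, so $O(\log n)$ random trials, or a standard $(n,s)$-perfect hash family of size $\Ot(1)$ in the deterministic variant, recover the optimum with high probability.

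Combining the two phases yields total running time $\Ot(n^{k-s}) \cdot O(n) = \Ot(n^{k-s+1})$, as claimed. The main obstacle is maintaining an $O(n)$ per-$\phi$ subproblem while simultaneously enforcing the injectivity of $\psi$ and its disjointness from $\phi(S)$; the independence of $I$ in $H$ is precisely what makes the objective separable once a coloring is fixed, so there is no need for a general $s$-sum or assignment subroutine and no dependence on the internal structure of $H[S]$ beyond constant-time lookups.
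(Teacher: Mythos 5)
Your proposal is correct and is essentially the paper's own argument: the paper likewise takes the separator $(S,H_1,\dots,H_s)$ with $S=V(H)\setminus I$ and each $H_i$ a singleton of $I$, enumerates the $n^{k-s}$ placements of $S$, independently minimizes over the placement of each independent-set vertex, and uses color-coding (there via the $\ew{H}$ reduction over all $k$ parts rather than just over $I$) to handle injectivity. The only nit is that the success probability for a fixed optimal placement is $s^{-s}$ rather than $s!/s^s$, since each $v_i$ must land in its designated class $V_i$; this is still a positive constant, so nothing changes.
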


This algorithm improves on the $O(n^{k-s+2})$ time algorithm of Vassilevska and 
Williams \cite{VW09} for the \text{\sc Min-Weight-H} problem.


\begin{table}[ht]
	{ \centering \small
\begin{tabular}{@{}lllllllllll@{}} \toprule

Subgraph & \vspacer & Exact & \vspacer & Lower Bound & \vspacer & Condition & 
\vspacer & Detection & \vspacer & Min \\

\midrule

$\triangle$ && $n^3$ && $n^3$ && $\ksum{3}$, APSP && $n^{\omega}$ \cite{IR77} && 
$n^3$ \\

$\kpath{4}$ && $n^3$ && $n^3$ && $\triangle$, $\ksum{5}$ && $n^2$ && $n^2$ \\

\midrule

$\kmatching{k}$ && $n^{2\cdot \cftwo{k}}$ && $n^{2\cdot \cftwo{k}}$ && 
$\ksum{k}$ && $n^2$ && $n^2$ \\

\mrany{$\kstar{k}$} && \mrany{$n^{\cftwo{k}+1}$} && $n^{\cftwo{k+1}}$ && 
$\kpsum$ && \mrany{$n$} && \mrany{$n^2$} \\
 
&& && $n^{\cftwo{k}\cdot\frac{k}{k-1}}$ && $\ksum{k}$ && && \\

$\kpath{k}$ && $n^{\cftwo{k+1}}$ && $n^{\cftwo{k+1}}$ && $\ksum{k}$ && $n^2$ && 
$n^2$ \\

$\kcycle{k}$ && $n^{\cftwo{k}+1}$ && $n^{\cftwo{k+1}}$ && $\kpath{k}$ && $n^2$ 
&& $n^3$ \\

\midrule

\mrany{Any} && \mrany{$n^k$} && $n^{\ctwo{k}}$ && $\ksum{k}$ && 
\mrany{$n^{\omega k/3}$ \cite{KKM00}} && \mrany{$n^k$} \\

&& && $n^{\varepsilon k}$ && (ETH) && && \\

$\alpha(H) = s$ && $n^{k-\fftwo{s}}$ && $n^{\cftwo{k}}$ && $\ksum{k}$ && 
$n^{k-s+1}$ \cite{KLL11} && $n^{k-s+1}$ [Thm. 4] \\

$tw(H) = w$ && $n^{\frac{2}{3}k + w}$ && $n^{\cftwo{k}}$ && $\ksum{k}$ && 
$n^{w+1}$ \cite{AYZ95} && $n^{2w}$ \cite{FLRSR12} \\

\midrule

$\ksum{k}$ && $n^{\cftwo{k}}$ && $n^{\cftwo{k}}$ && $\kmatching{k}$, $\kstar{k}$ 
&& - && - \\

\bottomrule
\end{tabular} \vspace{1em}
\label{table1} \\
\caption{The results shown in this work for $\eweight{H}$ for various $H$. The 
second column has the upper bound achieved by our algorithm from 
Corollary~\ref{cor4}. Improvements on the lower bound in the third column will 
imply improvements on the best known algorithms for the problems in the 
condition column. These lower bounds are obtained by our redctions, except for 
the first row which was proved in \cite{VW09}. For comparison, we give the 
running times for the (unweighted) detection and minimum-weight versions of the 
subgraph problems. The last row shows our conditional lower bounds for 
$\ksum{k}$. $\alpha(H)$ represents the independence number of $H$, $tw(H)$ is 
its treewidth. The results for the  ``Any'' row hold for all subgraphs on $k$ 
nodes. ETH stands for the Exponential Time Hypothesis. }

}
\end{table}

\paragraph{Organization.} We give formal definitions and preliminary reductions in 
Section~\ref{sec:prelim}.
In Section~\ref{sec:lowerbounds} we present reductions from $\ksum{k}$ to $\ews$ 
subgraph problems that prove Theorem~\ref{thm1}.
In Section~\ref{sec:vm} we define $\vertminor$s and prove Theorem~\ref{thm2}.
In Section~\ref{sec:algs}, we give reductions to $\ksum{k}$ and prove Theorems~\ref{thm3} and~\ref{thm4}.


\section{Preliminaries and Basic Constructions}
\label{sec:prelim}

For a graph $G$, we will use $V(G)$ to represent the set of vertices and $E(G)$ 
to represent the set of edges. The notation $N(v)$ will be used to represent the 
neighborhood of a vertex $v \in V(G)$.

\subsection{Reducibility}

We will use the following notion of reducibility between two problems. In 
weighted graph problems where the weights are integers in $[-M,M]$, $n$ will 
refer to the number of nodes times $\log M$. For $\ksum{k}$ problems where the 
input integers are in $[-M,M]$, $n$ will refer to the number of integers times 
$\log M$. In Appendix~\ref{app:reducibility} we formally define our notion of 
reducibility, which follows the definition of subcubic reductions in 
\cite{WW10}. Informally, for any two decision problems $A$ and $B$, we say that 
$A \rto{a}{b} B$ if for any $\epsilon > 0$, there exists a $\delta > 0$ such 
that if $B$ can be solved (w.h.p.) in time $n^{b - \epsilon}$, then $A$ can be 
solved (w.h.p.) in time $O(n^{a - \delta})$, where $n$ is the size of the input. 
Note that $\polylog(n)$ factor improvements in solving $B$ may not imply any 
improvements in solving $A$. Also, we say that $A \equivto{a}{b} B$ if and only 
if $A \rto{a}{b} B$ and $B \rto{b}{a} A$.

\subsection{The $\ksum{k}$ Problem}

Throughout the paper, it will be more convenient to work with a version of the 
$\ksum{k}$ problem that is more structured than the basic formulation. This 
version is usually referred to as either $\text{\scshape table-\text{$k$}-sum}$ 
or $\ksum{k}'$, and is known to be equivalent to the basic formulation, up to 
$k^k$ factors (by a simple extension of Theorem 3.1 in \cite{GO95}). For 
convenience, and since $f(k)$ factors are ignored in our running times, we will 
refer to this problem as $\ksum{k}$.

\begin{definition}[${\tableksum{k}}$]
	Given $k$ lists $L_1,\ldots,L_k$ each with $n$ numbers where $L_i = \{ x_{i,j} 
	\}_{j\in [n]} \subseteq \mathbb{Z}$, do there exist $k$ numbers 
	$x_{1,a_1},\ldots, x_{k,a_k}$, one from each list, such that $\sum_{i=1}^k 
	x_{i,a_i} = 0$?
\end{definition}

In our proofs, we always denote an instance of $\tableksum{k}$ by $L_1, \ldots, 
L_k$, where $L_i = \{ x_{i,j} \}_{j\in [n]} \subseteq \mathbb{Z}$, so that
$x_{i,j}$ is the $\jth$ number of the $\ith$ list $L_i$. We define a $\ksol{k}$ 
to be a set of $k$ numbers $\ksolsia{k}$, one from each list. The sum of a 
$\ksol{k}$ $\ksolsia{k}$ will be defined naturally as $\sum_{i=1}^k 
x_{i,a_i}$.

In \cite{Pat10}, $\patrascu$ defines the $\convksum{3}$ problem. We consider a 
natural extension of this problem.

\begin{definition}[${\convksum{k}}$]
	Given $k$ lists $L_1, \ldots, L_k$ each with $n$ numbers, where $L_i = \{ 
	x_{i,j} \}_{j\in [n]} \subseteq \mathbb{Z}$, does there exist a $\ksol{k}$ 
	$\ksolsia{k}$ such that $a_k = a_1 + \cdots + a_{k-1}$ and $\sum_{i=1}^k 
	x_{i,a_i} = 0$?
\end{definition}

Theorem~10 in \cite{Pat10} shows that $\ksum{3} \rto{2}{2} \convksum{3}$. By 
generalizing the proof, we show the following useful lemma (see proof in 
Appendix~\ref{app:conv}).

\begin{lemma}
	\label{lem:convksum}
	For all $k \geq 2$, $\ksum{k} \rto{\ctwo{k}}{\ctwo{k}} \convksum{k}$.
\end{lemma}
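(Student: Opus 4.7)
The plan is to generalize $\patrascu$'s proof of Theorem~10 in \cite{Pat10} (which handles the case $k=3$) to arbitrary $k \geq 2$ using almost-linear hashing. Given a $\tableksum{k}$ instance on lists $L_1, \ldots, L_k$ of $n$ integers each, I would draw an almost-linear hash function $h: \mathbb{Z} \to [R]$ with $R = \Theta(n)$ from Dietzfelbinger's family, obtaining two properties: (i) whenever $x_1 + \cdots + x_k = 0$, we have $h(x_1) + \cdots + h(x_k) \equiv s \pmod{R}$ for some ``carry'' $s$ lying in a fixed set of size $O(k)$, and (ii) with high probability over the choice of $h$, every bucket $B_{i,b} = h^{-1}(b) \cap L_i$ has size at most $\ell = \polylog(n)$.

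For each admissible carry $s$ and each slot assignment $(\sigma_1, \ldots, \sigma_k) \in [\ell]^k$, I would build a single $\convksum{k}$ instance on lists of length $R \approx n$: the number at position $b$ of $L_i'$ is the $\sigma_i$-th element of $B_{i,b}$ (with a large sentinel if $B_{i,b}$ has fewer than $\sigma_i$ elements). To translate the modular hashing relation $\sum_i h(x_{i,a_i}) \equiv s \pmod{R}$ into the exact convolution condition $a_k = a_1 + \cdots + a_{k-1}$, I would ``reflect'' the last list by placing at position $b$ of $L_k'$ the $\sigma_k$-th element of bucket $(s - b) \bmod R$ of $L_k$. Calling $\convksum{k}$ on every such instance and answering ``yes'' iff any call does is correct: a true $\ksum{k}$ solution falls in a bucket tuple matched by exactly one $(s, \sigma)$ pair, and any $\convksum{k}$ solution pulled from an instance trivially gives back a $\ksum{k}$ solution. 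The total number of $(s, \sigma)$ pairs is $O(k) \cdot \ell^k = \polylog(n)$ since $k$ is constant.

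The total running time is therefore $\polylog(n) \cdot T_{\convksum{k}}(N)$ on lists of size $N = \Theta(n)$, so any hypothetical $O(N^{\ctwo{k} - \varepsilon})$ algorithm for $\convksum{k}$ yields an $O(n^{\ctwo{k} - \delta})$ algorithm for $\ksum{k}$, matching the reduction strength $\rto{\ctwo{k}}{\ctwo{k}}$. The main technical obstacle is the mismatch between the \emph{exact} sum $a_k = a_1 + \cdots + a_{k-1}$ required by $\convksum{k}$ and the \emph{modular} sum produced by the hash: I plan to resolve this by padding each list to length $2R$ with sentinels so that the ``wrap-around'' and ``no-wrap-around'' cases are separated into two $\convksum{k}$ instances without changing the asymptotic cost. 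A secondary concern is certifying the $\polylog(n)$ bound on the maximum bucket size while using only an $O(1)$-wise independent Dietzfelbinger hash; this can be handled by choosing $R = n \cdot \polylog(n)$ and amplifying with a constant number of independent hash choices, which also does not affect the exponent $\ctwo{k}$.
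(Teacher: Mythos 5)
Your high-level plan is the same as the paper's: generalize $\patrascu$'s Theorem~10 by hashing with Dietzfelbinger's almost-linear family, using the $O(k)$ admissible ``carries'' to turn the modular relation among bucket indices into the exact convolution constraint, and enumerating one $\convksum{k}$ instance per (carry, slot-assignment) pair. The carry/reflection bookkeeping in your second paragraph matches the paper's construction (their shift parameter $y$ and their placement of the $i_k$-th element of bucket $B_{k,a+y}$ into position $a$ of the last list). However, there is a genuine gap in the step your whole instance count rests on: you assume that with $R=\Theta(n)$ buckets the \emph{maximum} bucket size is $\ell=\polylog(n)$ w.h.p. The Dietzfelbinger family is only $O(1)$-wise independent, and such families do not admit any $\polylog(n)$ max-load guarantee: $2$-universality only controls the expected number of colliding pairs, which even with $R = n\,\polylog(n)$ buckets caps the max load at $O(\sqrt{n})$ rather than $\polylog(n)$, and taking a constant number of independent hash functions cannot amplify away a constant-probability large-bucket event. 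You also cannot substitute a $\Theta(\log n)$-wise independent family to force small max load, because you would then lose the almost-linearity property that the entire reduction depends on. If a bucket of size $n^{\Omega(1)}$ survives, your $\ell^k$ slot enumeration blows up to $n^{\Omega(k)}$ instances and the exponent $\ctwo{k}$ is lost.

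The paper's proof sidesteps exactly this issue, and in a way you should adopt: it never bounds the maximum load. Instead it declares a bucket ``overloaded'' when its load exceeds a fixed constant factor times the average ($kn/t$ for $t$ buckets), \emph{discards} all numbers landing in overloaded buckets, and uses the two guarantees the family actually provides --- that only $O(kt)$ numbers are discarded and that each fixed number is discarded with probability $o(1)$. Repeating the whole reduction $O(k^k\log n)$ times with fresh hash functions ensures that w.h.p. some iteration retains all $k$ numbers of a solution. The price is that the slot enumeration now ranges over $[R]^k$ with $R$ the (guaranteed) load cap of the surviving buckets, so one gets polynomially many $\convksum{k}$ instances on polynomially shorter lists rather than $\polylog(n)$ instances on lists of length $\Theta(n)$; balancing the number of buckets against the cap still yields the claimed $\rto{\ctwo{k}}{\ctwo{k}}$ reduction. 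Your padding trick for separating the wrap-around case is fine and not the problem; the problem is the load bound.
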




\subsection{$H$-Partite Graphs}

Let $H$ be a subgraph on $k$ nodes with $V(H) = \{h_1, \ldots, h_k\}$.

\begin{definition}[${H}$-partite graph]
	Let $G$ be a graph such that $V(G)$ can be partitioned into $k$ sets 
	$\P_{h_1}, \ldots, \P_{h_k}$, each containing $n$ vertices. We will refer to 
	these $k$ sets as the \emph{super-nodes} of $G$. A pair of super-nodes 
	$(\P_{h_i}, \P_{h_j})$ will be called a \emph{super-edge} if $(h_i,h_j) \in 
	E(H)$.	Then, we say that $G$ is \emph{$H$-partite} if every edge in $E(G)$ 
	lies in some super-edge of $G$.
\end{definition}

We denote the set of vertices of an $H$-partite graph $G$ by $V(G) = \{ v_{i,j} 
\}_{i\in [k], j\in [n]}$, where $v_{i,j}$ is the $j^{th}$ vertex in super-node 
$\P_{h_i}$. We will say that $G$ is the \emph{complete $H$-partite graph} when 
$(v_{i,a}, v_{j,b}) \in E(G)$ if and only if $(\P_{h_i}, \P_{h_j})$ is a 
super-edge of $G$, for all $a,b \in [n]$.

An \emph{$H$-subgraph} of an $\hpg{H}$ $G$, denoted by $\chi = \hsub{i}{a}{k} 
\subseteq V(G)$, is a set of vertices for which there is exactly one vertex 
$\hsubv{i}{a}$ from each $\sn$ $\P_{h_i}$, where $a_i$ is an index in $[n]$. 
Given a weight function $w: (V(G) \cup E(G)) \rightarrow \mathbb{Z}$ for the 
nodes and edges of $G$, the total weight of the subgraph $\chi$ is defined 
naturally as

\[ w(\chi) = \sum\limits_{h_i\in V(H)} w(\hsubv{i}{a}) + 
\sum\limits_{(h_i,h_j)\in E(H)} w(\hsubv{i}{a}, \hsubv{j}{a}). \] 

Figure~\ref{fig:hsubgraph} illustrates our definitions and notations of 
$\hpg{H}$s and $H$-subgraphs.

\begin{figure}
	\begin{minipage}[b]{0.5\linewidth}
		\centering
		\subfigure[A subgraph $H$]{
			\includegraphics[scale=0.4]{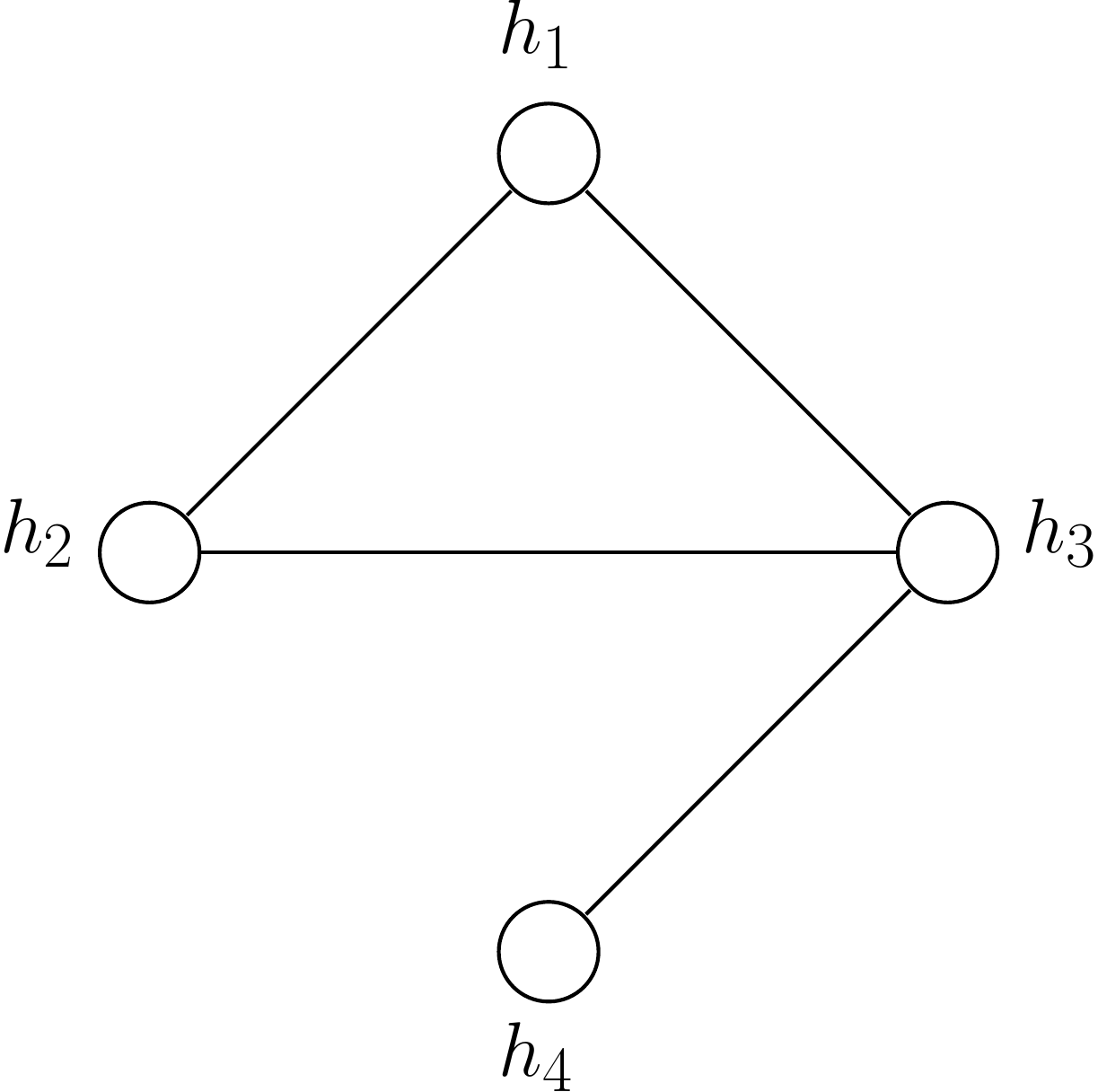}
			\label{subfig:original}
		}
	\end{minipage}
	\begin{minipage}[b]{0.5\linewidth}
		\centering
		\subfigure[An $H$-partite graph $G$]{
			\includegraphics[scale=0.23]{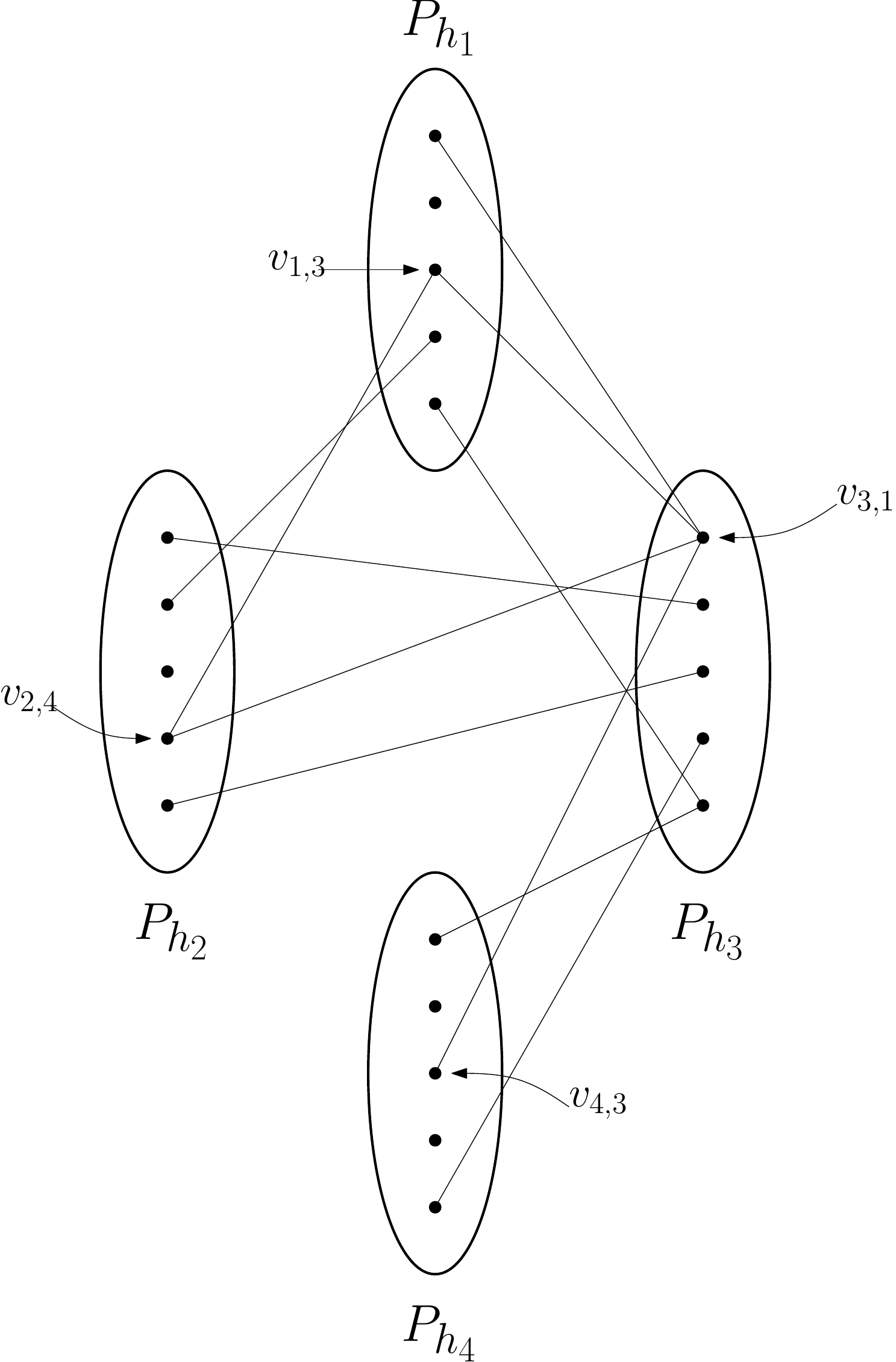}
			\label{subfig:3core}
		}
	\end{minipage}

	\caption{A subgraph $H$ along with an $H$-partite graph $G$. We associate a 
	partition $\P_{h_i} \subseteq V(G)$ with each node $h_i \in V(H)$. The vertex 
	$v_{i,j} \in V(G)$ represents the $j^{th}$ node in partition $\P_{h_i}$ in 
	$G$. Note that the set $\{ v_{1,3}, v_{2,4}, v_{3,1}, v_{4,3} \}$ is an 
	$H$-subgraph in $G$. Also, since there are no edges between a vertex in 
	$\P_{h_3}$ and a vertex in $\P_{h_4}$, $G$ is a valid $H$-partite graph.}
	\label{fig:hsubgraph}
\end{figure}

Now, we define a more structured version of the $\eweight{H}$ problem which is 
easier to work with.

\begin{definition}[The ${\ew{H}}$ Problem]
	Given a complete $\hpg{H}$ graph $G$ with a weight function $w: (V(G)\cup 
	E(G))\rightarrow \mathbb{Z}$ for the nodes and edges, does there exist an 
	$H$-subgraph of total weight $0$?
\end{definition}

In appendix~\ref{app:hpg}, we prove the following lemma, showing that the two 
versions of the $\eweight{H}$ problem are reducible to one another in a tight 
manner. All of our proofs will use the formulation of $\ew{H}$, yet the results 
will also apply to $\eweight{H}$. Note that our definitions of $H$-partite 
graphs uses ideas similar to color-coding \cite{AYZ95}.

\begin{lemma}
	\label{lem:ewhequiv}
	Let $\alpha > 1$. $\eweight{H} \equivto{\alpha}{\alpha} \ew{H}$.
\end{lemma}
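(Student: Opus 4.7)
The plan is to establish both directions of the equivalence. For $\ew{H} \rto{\alpha}{\alpha} \eweight{H}$ I will view the $\hpg{H}$ $G'$ as an ordinary edge-weighted graph and add a numerical encoding of super-node membership to its edge weights. For $\eweight{H} \rto{\alpha}{\alpha} \ew{H}$ I will use color-coding to carve the general graph into an $\hpg{H}$. Since $k$ is a constant, both reductions blow up the input by only a constant factor and invoke the other problem polylogarithmically many times, so the exponent $\alpha$ in the running time is preserved up to $\polylog(n)$ factors.

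For the easy direction, given a complete $\hpg{H}$ $G'$ with edge weight function $w$ and target $t$, I build a graph $G$ on the same vertices and edges as $G'$ whose weights embed ``identity tags'' for the super-nodes. Pick a large integer $M$ exceeding the combined magnitude of all $|E(H)|$-term original weighted sums and of $|t|$, and set $T_i = M^i$. Replace each edge weight $w(v_{i,a},v_{j,b})$ by $w(v_{i,a},v_{j,b}) + T_i + T_j$, and set the new target to $t' = t + \sum_{m=1}^k \deg_H(h_m)\, T_m$. Because every edge of $G'$ lies in a super-edge of $H$, any copy of $H$ in $G$ is described by a graph homomorphism $\pi \colon V(H) \to V(H)$ that assigns each $h_l$ to the super-node containing its image. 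The total tag contribution of such a copy is $\sum_l \deg_H(h_l)\,T_{\pi(l)} = \sum_m T_m\, D_m(\pi)$, where $D_m(\pi) = \sum_{l\,:\,\pi(l)=m}\deg_H(h_l)$. When $\pi$ is a permutation it is automatically an automorphism of $H$, so $D_m(\pi) = \deg_H(h_m)$ and the copy realizes an H-subgraph of $G'$ of original weight $t$. When $\pi$ is not a permutation, then because $H$ has no isolated vertex some unused super-node $m^*$ has $D_{m^*}(\pi) = 0 \neq \deg_H(h_{m^*})$; the base-$M$ choice of $T_m$ then makes $\sum_m T_m\bigl(\deg_H(h_m) - D_m(\pi)\bigr)$ have magnitude far exceeding the range of the original weights, ruling out weight-$t'$ copies.

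For the harder direction, given an $\eweight{H}$ instance on a graph $G$ with $n$ vertices, I assign each vertex an independent uniformly random color in $[k]$. For each of the $k!$ bijections $\sigma \colon V(H) \to [k]$, I build a complete $\hpg{H}$ $G'_\sigma$ by letting super-node $\P_{h_i}$ consist of the vertices of color $\sigma(h_i)$, padded with dummies up to $n$ vertices. I set the weight of each edge $(u,v)$ in super-edge $(\P_{h_i},\P_{h_j})$ to $w_G(u,v)$ when $(u,v)$ is a real edge of $G$ and both endpoints are real, and to a value larger than any achievable weighted $|E(H)|$-term sum otherwise, so that dummies and missing edges cannot appear in a weight-$0$ H-subgraph. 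Any weight-$0$ H-subgraph of $G'_\sigma$ decodes to a weight-$0$ copy of $H$ in $G$ using $k$ distinct real vertices with distinct colors. Conversely, a given weight-$0$ copy of $H$ in $G$ is ``colorful'' with probability $k!/k^{k}$, and then it appears as an H-subgraph in $G'_\sigma$ for the unique $\sigma$ aligning with the coloring. Repeating the entire procedure $O(k^k \log n)$ times drives the failure probability below $1/\poly(n)$.

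The main obstacle is Direction~1: the tag trick must suppress every non-permutation homomorphism $\pi$ without bloating the weights beyond polynomial bit-length. The base-$M$ argument, using the harmless assumption that $H$ has no isolated vertex, does exactly this. Direction~2 is standard color-coding; the only delicate point is to verify that the large-weight dummy edges and non-edges cannot conspire to form a spurious zero-weight H-subgraph.
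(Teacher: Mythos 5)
Your overall plan coincides with the paper's: two claims, one reducing $\ew{H}$ to $\eweight{H}$ by arithmetically tagging the partition structure into the edge weights, and one reducing $\eweight{H}$ to $\ew{H}$ by random color-coding with $O(k^k\log n)$ repetitions. Your second direction is correct and in fact more careful than the paper's (you explicitly pad with dummies and give prohibitive weights to non-edges so that the resulting $\hpg{H}$ is \emph{complete}, as the definition of $\ew{H}$ requires, and you correctly observe that same-sign prohibitive weights cannot cancel). Your first direction differs in the choice of tag: you place $M^i+M^j$ on every edge of the $\se$ $(\P_{h_i},\P_{h_j})$ and must then analyze which homomorphisms $\pi\colon V(H)\to V(H)$ can match the degree-weighted target, whereas the paper assigns $d^z$ to the $\zth$ $\se$ directly, so that a valid $H$-subgraph uses each $\se$ exactly once and the base-$d$ argument is immediate, with no automorphism analysis needed. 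Your variant does work (a bijective endomorphism of a finite graph is an automorphism, and automorphisms preserve degrees, so the degree-weighted tag sum is automorphism-invariant), but you should also require $M$ to exceed the coefficient bound $2|E(H)|$ for the base-$M$ uniqueness step; since this is a constant it is harmless, but it is not implied by the condition you state.

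There is one genuine omission in your first direction: the $\ew{H}$ problem is defined with a weight function on $V(G)\cup E(G)$, i.e.\ it has \emph{node} weights as well as edge weights, while $\eweight{H}$ has only edge weights. Your construction copies the edge weights and adds tags, but never accounts for the node weights, so as written the reduction is wrong on any instance with a nonzero node weight. The fix is the one the paper uses: for each $h_i\in V(H)$ pick an arbitrary neighbor $h_j$ (this exists because $H$ has no isolated vertices, the same assumption you already invoke) and fold $w(v_{i,a})$ into $w^*(v_{i,a},v_{j,b})$ for all $b$; since an $H$-subgraph uses exactly one vertex per $\sn$ and exactly one edge per $\se$, each node weight is then counted exactly once. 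With that repair, and the constant-size adjustment to $M$ noted above, your proof is complete.
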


\section{Reductions from $\ksum{k}$ to Subgraph Problems} 
\label{sec:lowerbounds}


In this section we prove Theorem~\ref{thm1} by proving four reductions, each of 
these reductions uses a somewhat different way to encode $\ksum{k}$ in the 
structure of the subgraph. First, we give a generic reduction from $\ksum{k}$ to 
$\ew{H}$ for an arbitrary $H$ on $k$ nodes. We set the node weights of the graph 
to be the numbers in the $\ksum{k}$ instance, in a certain way. 

\begin{lemma}[${\ksum{k} \rto{\ctwo{k}}{\ctwo{k}} \ew{H}}$]
	\label{lem:generalbound}
	Let $H$ be a subgraph with $k$ nodes. Then, $\ksum{k}$ on $n$ numbers can be 
	reduced to a single instance of $\ew{H}$ on $kn$ vertices.
\end{lemma}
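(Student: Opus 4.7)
The plan is to encode the $\tableksum{k}$ instance directly into node weights of a complete $H$-partite graph, taking all edge weights to be $0$ so that the sum of an $H$-subgraph reads off the sum of one chosen number from each list.

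Concretely, given lists $L_1, \ldots, L_k$ with $L_i = \{x_{i,j}\}_{j \in [n]}$, I would label $V(H) = \{h_1, \ldots, h_k\}$ arbitrarily and build the complete $H$-partite graph $G$ whose super-nodes $\P_{h_1}, \ldots, \P_{h_k}$ each contain $n$ vertices $v_{i,1}, \ldots, v_{i,n}$. Define the weight function by $w(v_{i,j}) = x_{i,j}$ for every node, and $w(e) = 0$ for every edge of $G$. Then for any $H$-subgraph $\chi = \hsub{i}{a}{k}$, we have
\[
w(\chi) = \sum_{i=1}^k w(v_{i,a_i}) + \sum_{(h_i,h_j) \in E(H)} w(v_{i,a_i}, v_{j,a_j}) = \sum_{i=1}^k x_{i,a_i},
\]
so $w(\chi) = 0$ precisely when $x_{1,a_1}, \ldots, x_{k,a_k}$ form a valid $\ksol{k}$.

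The construction has $|V(G)| = kn$ vertices and $O(k^2 n^2)$ edges, so it can be produced in polynomial time, and since $k$ is treated as a constant any $O(N^{\ctwo{k} - \varepsilon})$ algorithm for $\ew{H}$ on $N = kn$ vertices immediately gives an $O(n^{\ctwo{k} - \varepsilon})$ algorithm for $\ksum{k}$, matching the claim $\ksum{k} \rto{\ctwo{k}}{\ctwo{k}} \ew{H}$. Since the reduction relies only on node weights and the $H$-partite structure guarantees that every $H$-subgraph selects exactly one vertex per super-node, there is no real obstacle here; the mild subtlety is simply verifying that one indeed lands in the $\ew{H}$ formulation (with combined node and edge weights) rather than the purely edge-weighted $\eweight{H}$ formulation, which is already handled by Lemma~\ref{lem:ewhequiv}.
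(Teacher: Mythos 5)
Your proposal is correct and follows essentially the same approach as the paper's own proof: a complete $H$-partite graph on $kn$ vertices with node weights $w(v_{i,j}) = x_{i,j}$ and all edge weights set to $0$, so that the weight of an $H$-subgraph equals the sum of the corresponding $\ksol{k}$. The closing remark about working in the $\ew{H}$ formulation and invoking Lemma~\ref{lem:ewhequiv} is exactly how the paper handles it as well.
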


\begin{proof}
Let $H$ be a subgraph with node set $\{h_1,\ldots,h_k\}$. Given a 
$\tableksum{k}$ instance of $k$ lists $L_1, \ldots, L_k$, where each $L_i = 
\{x_{i,j} \}_{j \in [n]}$, we create a complete $H$-partite graph $G$ on $kn$ 
vertices where we associate each super-node $\P_{h_i}$ with a list $L_i$, and 
the $\jth$ vertex in the $\sn$ $v_{i,j}$ with the number $x_{i,j} \in L_i$. To 
do this, we set all edge weights to be $0$, and for every $i\in [k], j\in [n]$, 
we set $w(v_{i,j})=x_{i,j}$. Now, for any $H$-subgraph $\chi = \hsub{i}{a}{k}$ 
of $G$, the total weight of $\chi$ will be exactly $\sum_{i=1}^k x_{i,a_i}$, 
which is the sum of the $\ksol{k}$ $\ksolsia{k}$. For the other direction, for 
any $\ksol{k}$ $\ksolsia{k}$, the $H$-subgraph $\hsub{i}{a}{k}$ of $G$ has 
weight exactly $\sum_{i=1}^k x_{i,a_i}$. Therefore, there is a $\ksol{k}$ of sum 
$0$ iff there is an $H$-subgraph in $G$ of total weight $0$. 
\end{proof}

We utilize the edge weights of the graph, rather than the node weights, to prove 
a tight reduction to $\kmatching{k}$. 

\begin{lemma}[${\ksum{k} \rto{\ctwo{k}}{2 \cdot \ctwo{k}} 
	\ew{\kmatching{k}}}$]
	\label{lem:matching}
	Let $H$ be the $\kmatching{k}$ subgraph. Then, $\ksum{k}$ on $n$ numbers can 
	be reduced to a single instance of $\ew{H}$ on $k \sqrt{n}$ vertices.
\end{lemma}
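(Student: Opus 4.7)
The plan is to exploit the $k$ disjoint edges of the matching to ``double up'' the encoding capacity: each edge carries one number of the $\ksum{k}$ instance, but because an edge has two endpoints, each of the $2k$ super-nodes only needs to hold $\sqrt{n}$ vertices rather than $n$. Concretely, identify the index set $[n]$ with $[\sqrt{n}]\times[\sqrt{n}]$ (padding up to the next perfect square if necessary, which only changes $n$ by a constant factor), so that each number in list $L_i$ can be written as $x_{i,(a,b)}$ for $a,b\in[\sqrt{n}]$.

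Label the edges of $H=\kmatching{k}$ as $(h_{2i-1},h_{2i})$ for $i\in[k]$, and construct the complete $H$-partite graph $G$ whose $2k$ super-nodes $\P_{h_1},\ldots,\P_{h_{2k}}$ each contain $\sqrt{n}$ vertices, for $2k\sqrt{n}$ vertices in total (which, for constant $k$, matches the claimed $k\sqrt{n}$ bound in the statement). All vertex weights are set to $0$. For each $i\in[k]$ and each pair $(a,b)\in[\sqrt{n}]\times[\sqrt{n}]$, define the edge weight
\[
w(v_{2i-1,a},\,v_{2i,b}) \;=\; x_{i,(a,b)}.
\]
Since the super-nodes $\P_{h_{2i-1}}$ and $\P_{h_{2i}}$ are only connected by the super-edge corresponding to the $i^{\text{th}}$ matching edge, no other edge weights are needed.

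Next I would verify the equivalence in both directions. Any $H$-subgraph $\chi=\hsub{i}{a}{2k}$ of $G$ picks one vertex from each super-node, and its total weight (since node weights are $0$) is exactly $\sum_{i=1}^{k} x_{i,(a_{2i-1},a_{2i})}$, which is the sum of a $\ksol{k}$ of the original instance. Conversely, any $\ksol{k}$ $\{x_{i,(a_i,b_i)}\}_{i\in[k]}$ yields an $H$-subgraph of the same total weight by choosing $v_{2i-1,a_i}$ and $v_{2i,b_i}$. Hence $G$ contains an exact-weight-zero matching if and only if the $\ksum{k}$ instance has a solution.

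For the tightness of the reduction, an $O(N^{2\ctwo{k}-\varepsilon})$ algorithm for $\ew{\kmatching{k}}$ on $N=2k\sqrt{n}$ vertices runs in time $O((\sqrt{n})^{2\ctwo{k}-\varepsilon})=O(n^{\ctwo{k}-\varepsilon/2})$ on the produced instance, matching the $\rto{\ctwo{k}}{2\cdot\ctwo{k}}$ claim after invoking Lemma~\ref{lem:ewhequiv} to pass from $\ew{H}$ to $\eweight{H}$. I don't foresee a real obstacle here—the construction is a one-line encoding—so the only care needed is to confirm that the super-node/super-edge structure of the $H$-partite graph indeed forces the chosen edges to form a matching (which is automatic since the matching edges of $H$ are vertex-disjoint) and to handle the rounding when $n$ is not a perfect square.
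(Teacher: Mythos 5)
Your construction is the same as the paper's: each list $L_i$ is placed on the $n$ edges of the $i$-th super-edge via the bijection $[n]\cong[\sqrt{n}]\times[\sqrt{n}]$ (the paper writes this as $b_i = a_{2i-1}\sqrt{n}+a_{2i}$), with vertex weights zero, and the two directions of the equivalence are verified identically. The proposal is correct and adds only minor bookkeeping (padding to a perfect square, the explicit runtime calculation) beyond what the paper states.
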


\begin{proof}
Given $k$ lists $L_1,\ldots,L_k$ each with $n$ numbers, we will construct a 
complete $\hpg{H}$ $G$ on $k \sqrt{n}$ vertices, where there will be super-edges 
$(\P_{h_{2i-1}}, \P_{h_{2i}})$ for each $i \in [k]$, each with $n$ edges over 
$\sqrt{n}$ vertices. We place each number in $L_i$ on an arbitrary edge within 
the $i^{th}$ $\se$ of $G$ by setting $w(v_{2i-1,a},v_{2i,b}) = x_{i,(a\cdot 
\sqrt{n} + b)}$ for all $a,b \in [\sqrt{n}]$. Now, note that the $H$-subgraph 
$\hsub{i}{a}{2k}$ of $G$ has weight $\sum_{i=1}^k x_{i,b_i}$, where $b_i = 
a_{2i-1} \cdot \sqrt{n} + a_{2i}$. This is precisely the sum of the $\ksol{k}$ 
$\ksols{i}{b}{k}$. And, for every $\ksol{k}$ $\ksols{i}{b}{k}$, if we choose 
$a_{2i-1}=\lfloor b_i / \sqrt{n} \rfloor$ and $a_{2i} = b_i - a_{2i-1} \sqrt{n} 
$, the $H$-subgraph $\hsub{i}{a}{2k}$ has weight $\sum_{i=1}^k x_{i,b_i}$. 
Therefore, there is a $\ksol{k}$ of sum $0$ iff there is an $H$-subgraph in $G$ 
of total weight 0. 
\end{proof}

Another special type of subgraph which can be shown to be tightly related to the 
$\ksum{k}$ problem is the $k$-edge star subgraph. We define the $\kstar{k}$ subgraph $H$ to be such that $V(H)=\{h_1,\ldots,h_k,h_{k+1} \}$ and $E(H)= \{ (h_1, h_{k+1}), (h_2, h_{k+1}), 
\ldots, (h_k, h_{k+1}) \}$, so that $h_{k+1}$ is the center node.

\begin{lemma}[${\ksum{k} \rto{\cftwo{k}}{\cftwo{k}\cdot\frac{k}{k-1}} 
	\ew{\kstar{k}}}$]
	\label{lem:star}
	Let $H$ be the $\kstar{k}$ subgraph, and let $\alpha>2$. If $\ew{H}$ can be 
	solved in $O(n^\alpha)$ time, then $\ksum{k}$ can be solved in 
	$\Ot(n^{(1-1/k)\cdot\alpha})$ time.
\end{lemma}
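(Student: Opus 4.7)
The plan is to first apply Lemma~\ref{lem:convksum} to replace $\ksum{k}$ by $\convksum{k}$, and then to encode a $\convksum{k}$ instance into $O(k)$ instances of $\ew{\kstar{k}}$ that exploit the star's asymmetric ``center plus $k$ leaves'' shape. The payoff of routing through $\convksum{k}$ is that the extra constraint $a_k = a_1 + \cdots + a_{k-1}$ collapses the effective solution space from $n^k$ to $n^{k-1}$ tuples, which is exactly the asymmetry that a $k$-star can encode efficiently.

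Given a $\convksum{k}$ instance $L_1,\ldots,L_k$ on $n$ numbers, I would partition each $L_i$ into $p$ contiguous buckets of size $\ell = n/p$, writing $a_i = b_i \ell + j_i$ with $b_i \in [p]$ and $j_i \in [\ell]$. The index constraint then decomposes as $b_k = \sum_{i<k} b_i + o$ and $j_k = \sum_{i<k} j_i - o\ell$, where $o \in \{0, 1, \ldots, k-2\}$ is the carry from $j$-overflow. I then build, for each of the $O(k)$ values of $o$, a complete $\kstar{k}$-partite graph whose center super-node has one vertex $v_{(b_1,\ldots,b_{k-1})}$ for every tuple in $[p]^{k-1}$ (size $p^{k-1}$), and whose $k$ leaf super-nodes each have $\ell$ vertices. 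For $i < k$, the edge from $v_{(b_1,\ldots,b_{k-1})}$ to leaf-$i$ vertex $u_j^i$ encodes $L_i[b_i\ell + j]$, and the edge to leaf-$k$ vertex $u_j^k$ encodes $L_k[(\sum_{i<k} b_i + o)\ell + j]$.

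The main obstacle is enforcing index consistency: an exact-weight star in this construction does not automatically correspond to a valid $\convksum{k}$ tuple, because nothing in the star's weight forces the chosen positions to satisfy $j_k = \sum_{i<k} j_i - o\ell$. I handle this with a standard polynomial-weighting trick: I replace each list value $x$ by $x \cdot A + c$, where $c$ is a ``check coordinate'' and $A$ is a sufficiently large scalar. Concretely, the leaf-$i$ edge weight for $i < k$ becomes $L_i[b_i\ell + j]\cdot A + j$, and the leaf-$k$ edge weight becomes $L_k[(\sum_{i<k} b_i + o)\ell + j]\cdot A - (j + o\ell)$. A star's total weight then decomposes as $A \cdot \bigl(\sum_i L_i[a_i]\bigr) + \bigl(\sum_{i<k} j_i - j_k - o\ell\bigr)$, and choosing $A > k\ell$ (so the additive part has absolute value strictly less than $A$) guarantees that the total vanishes if and only if both the $\convksum{k}$ sum is zero and the index constraint is satisfied. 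Boundary cases where $\sum_{i<k} b_i + o \notin [0,p)$ are assigned a sentinel weight too large to appear in any zero-weight star.

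For the parameter choice, the constructed graph has $N = p^{k-1} + k\ell$ vertices. Minimizing $N$ subject to $p \ell = n$ balances $p^{k-1} = \ell$, yielding $p = n^{1/k}$, $\ell = n^{(k-1)/k}$, and $N = \Theta(n^{(k-1)/k})$, with all $k+1$ super-nodes of size $\ell$ as the $\kstar{k}$-partite definition demands. Running the hypothesized $O(n^\alpha)$ algorithm for $\ew{\kstar{k}}$ on each of the $O(k)$ overflow-indexed instances, together with the polylogarithmic overhead of Lemma~\ref{lem:convksum}, solves the original $\ksum{k}$ instance in total time $\tilde{O}(N^\alpha) = \tilde{O}(n^{(1-1/k)\alpha})$, matching the claimed bound.
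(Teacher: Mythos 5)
Your encoding of $\convksum{k}$ into $\ew{\kstar{k}}$ is correct: one center vertex per tuple $(b_1,\ldots,b_{k-1})\in[p]^{k-1}$ of bucket indices, enumeration of the carry $o\in\{0,\ldots,k-2\}$, and enforcement of $\sum_{i<k}j_i=j_k+o\ell$ via a check coordinate kept strictly below the scalar $A$ is a clean, deterministic way to exploit the star's center/leaf asymmetry, and the balancing $p^{k-1}=\ell$ does give super-nodes of the common size $n^{(k-1)/k}$ that the $H$-partite definition requires. This middle step is genuinely different from the paper's: the paper never routes the star through $\convksum{k}$. It instead defines an auxiliary problem $\ksumn$ (a sequence of $n$ independent $\ksum{k}$ instances, each on $n$ numbers), packs one such sequence into a single star instance (the center vertex selects the instance, the leaf edges carry its lists), and gets from $\ksum{k}$ to $\ksumn$ by Dietzfelbinger hashing into $t=n^{1/k}$ buckets, with the ``almost linearity'' of the hash playing exactly the role your convolution constraint plays.

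The gap is in your final sentence. Lemma~\ref{lem:convksum} is not a polylog-overhead, running-time-preserving reduction: it is a $\rto{\ctwo{k}}{\ctwo{k}}$ reduction whose proof hashes into buckets, spawns polynomially many sub-instances, must separately pay for numbers landing in overloaded buckets (a term of the form $t\cdot n^{\ctwo{k-1}}$), and balances its parameters against the specific threshold $n^{\ctwo{k}}$. From ``$\convksum{k}$ in $O(n^{(1-1/k)\alpha})$'' you may therefore conclude ``$\ksum{k}$ in $O(n^{\ctwo{k}-\delta})$ for some $\delta>0$,'' which suffices for the reducibility statement in the lemma's bracketed header, but not the precise bound $\Ot(n^{(1-1/k)\alpha})$ claimed in the lemma's body: unwinding the hashing reduction at your target exponent shows the overhead is polynomial and can dominate (the overloaded-bucket term alone is already $\Omega(n^{\ctwo{k}})$ when $k$ is even). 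To recover the stated running time you would need to apply the hashing directly with parameters tuned to the star --- as the paper's second claim does, producing $n^{(k-1)/k}$ sub-instances on $n^{(k-1)/k}$ numbers each and feeding that whole batch into one star instance --- rather than invoking Lemma~\ref{lem:convksum} as a black box.
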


\begin{proof}
	To prove the lemma we define the problem $\ksumn$ to be the following. Given a 
	sequence of $n$ $\tableksum{k}$ instances, each on $n$ numbers, does there 
	exist an instance in the sequence that has a solution of sum 0?
	Then, we prove two claims, one showing a reduction from $\ksumn$ to 
	$\ew{\kstar{k}}$, and the other showing a self-reduction for $\ksum{k}$ that 
	relates it to $\ksumn$.

	\begin{claim}
	Let $H$ be the $\kstar{k}$ subgraph. $\ksumn$ can be reduced to the $\ew{H}$ 
	problem on a graph of $n$ nodes.
	\end{claim}

	\begin{proof}
		
		Given a $\ksumn$ instance, denote the $i^{th}$ $\ksum{k}$ instance in the 
		sequence as $L^{(i)}_1, \ldots, L^{(i)}_k$, where the $j^{th}$ list of the 
		$i^{th}$ instance is $L_j^{(i)} = \{ x^{(i)}_{j,\ell} \}_{\ell \in [n]} 
		\subseteq \mathbb{Z}$. We create an $\hpg{H}$ $G$ on $kn$ nodes, where we 
		associate the $\ith$ vertex in $\sn$ $\P_{h_{k+1}}$ (vertex $v_{k+1,i}$), 
		with the $\ith$ instance of the sequence, and we assign the $n$ numbers of 
		list $L^{(i)}_j$ to the $n$ edges incident to $v_{k+1,i}$ within the $\se$ 
		$(\P_{h_j}, \P_{h_{k+1}})$.	This can be done by setting, for every $i\in 
		[n], j\in [k], \ell \in [n]$, $w(v_{k+1,i} , v_{j,\ell}) = 
		x^{(i)}_{j,\ell}$, and the weight of every vertex in $G$ to $0$.
	
	Assume there is an $H$-subgraph $\chi = \hsub{j}{a}{k+1}$ in $G$ of total 
	weight $0$. Let $i=a_{k+1}$, and consider the $\ksol{k}$ for the $i^{th}$ 
	instance $\ksolsuper{i}{j}{a}{k}$. Note that its sum is exactly the total 
	weight of the $H$-subgraph $\chi$, which is $0$. For the other direction, 
	assume the $i^{th}$ instance has a $\ksol{k}$ $\ksolsuper{i}{j}{a}{k}$ of sum 
	$0$, and define the $H$-subgraph $\chi = \hsub{j}{a}{k+1}$, where $a_{k+1} = 
	i$. Again, note that the total weight of $\chi$ is exactly $\sum_{j=1}^k 
	x^{(i)}_{j,a_j} = 0$.
	\end{proof}

\begin{claim}
	Let $k\geq2$, and $\alpha\geq2$.
	If $\ksumn$ can be solved in $O(n^\alpha)$ time, then $\ksum{k}$ can be solved 
	in $\Ot(n^{\alpha \cdot \frac{k-1}{k}})$ time.
\end{claim}

\begin{proof}

We will use the hashing scheme due to Dietzfelbinger \cite{Die96} that we described and 
used in Appendix~\ref{app:conv}, to hash the numbers into buckets. Given a 
$\ksum{k}$ instance $L_1,\ldots,L_k$, our reduction is as follows:
\begin{enumerate}

\item Repeat the following $c \cdot k^k \cdot \log n$ times.
\begin{enumerate}

\item Pick a hash function $h \in \mathcal{H}_{M,t}$, for $t$, and map each 
	number $x_{i,j}$ to bucket $B_{i,h(x_{i,j})}$.

\item Ignore all numbers mapped to ``overloaded'' buckets.

\item Now each bucket has at most $N = kn/t=k\cdot n^{\frac{k-1}{k}}$ numbers. 
	We will generate a sequence of $k\cdot t^{k-1}=k\cdot n^{\frac{k-1}{k}}=N$ 
	instances of $\ksum{k}$, each on $N$ numbers, such that one of them has a 
	solution iff the original $\ksum{k}$ input has a solution. This sequence will 
	be the input to $\ksumN$:

Go over all $t^{k-1}$ choices of $k-1$ buckets, 
$B_{1,a_1},\ldots,B_{k-1,a_{k-1}}$, and add $k$ instances of $k$ to the 
sequence, one for each of the $k$ buckets, $B_{k,a^{(1)}},\ldots,B_{k,a^{(k)}}$, 
for which the last number in the solution might be in.
\end{enumerate}
\end{enumerate}

First note that if the $\ksum{k}$ had a solution, it's numbers will be mapped 
into not ``overloaded" buckets in one of the iterations, with probability 
$1-O(n^{-c})$. Then, in such case, the reduction will succeed due to the 
``almost linearity'' property of the hashing. Now, to conclude the proof of the 
claim, assume $\ksumN$ can be solved in time $O(N^\alpha)$, and observe that 
using the reduction one gets an algorithm for $\ksum{n}$ running in time 
$\Ot(N^\alpha) = \Ot(n^{\alpha\cdot\frac{k-1}{k}})$, as claimed. 

\end{proof}
\end{proof}

\medskip

Our final reduction between $\ksum{k}$ and $\ew{H}$ for a class of subgraphs $H$ 
is as follows. First, define the $\kpath{k}$ subgraph $H$ to be such that 
$V(H)=\{h_1,\ldots,h_{k}\}$ and 
$E(H)=\{(h_1,h_2),(h_2,h_3),\ldots,(h_{k-1},h_{k})\}$.

\begin{lemma}[${\ksum{k} \rto{\cftwo{k}}{\cftwo{k}} 
	\ew{\kpath{(k\text{-1})}}}$]
	\label{lem:path}
	Let $H$ be the $\kpath{k}$ subgraph. If $\ew{H}$ can be solved in time 
	$O(n^{\ctwo{k}-\varepsilon})$ for some $\varepsilon>0$, then $\ksum{k$+1$}$ 
	can be solved in time $O(n^{\ctwo{k} - \varepsilon'})$, for some 
	$\varepsilon'>0$.\end{lemma}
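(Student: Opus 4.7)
The plan is to factor the reduction through the convolution version of $(k{+}1)$-sum: by Lemma~\ref{lem:convksum}, a $T(n)$-time algorithm for $\convksum{k+1}$ yields an $\Ot(T(n))$-time algorithm for $\ksum{k+1}$, so it suffices to reduce $\convksum{k+1}$ on $n$-size lists to a single $\ew{\kpath{k}}$ instance whose super-nodes have size $O(n)$.

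For this second reduction, the key observation is that $\kpath{k}$ has only $k$ super-nodes and $k-1$ super-edges but we need to encode $k{+}1$ lists. I will have the super-nodes carry the \emph{partial sums} of indices rather than the indices themselves: write $b_i := a_1 + a_2 + \cdots + a_i$, so that the convolution constraint $a_{k+1} = a_1 + \cdots + a_k$ becomes $a_{k+1} = b_k$ and each $a_i$ for $i \geq 2$ is recovered as $b_i - b_{i-1}$. I build the complete $\kpath{k}$-partite graph $G$ in which each super-node $\P_{h_i}$ contains one vertex $v_{i,b}$ per admissible partial-sum value $b$, giving $N = O(n)$ vertices per super-node (since all $b_i$ lie in $[i,n]$ for any valid $\convksum{k+1}$ solution).

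The weights are chosen so that a ``well-formed'' path $v_{1,b_1}, v_{2,b_2}, \ldots, v_{k,b_k}$ has total weight exactly $\sum_{i=1}^{k+1} x_{i,a_i}$ under the substitution $a_1 = b_1$, $a_i = b_i - b_{i-1}$ for $i \in \{2,\ldots,k\}$, $a_{k+1} = b_k$: I place $x_{1,b}$ on $v_{1,b}$, place $x_{k+1,b}$ on $v_{k,b}$, give weight $0$ to every intermediate vertex, and put $x_{i+1, b' - b}$ on each edge $(v_{i,b}, v_{i+1, b'})$ whenever $b' - b \in [n]$. On every other (``invalid'') vertex or edge I place a large sentinel weight $W$ chosen strictly greater than $(2k{-}1) M$, where $M$ is the maximum $|x_{i,j}|$. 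Since a path contains $2k{-}1$ weighted slots, any path touching a sentinel has total weight at least $W - (2k{-}1)M > 0$, so zero-weight paths correspond exactly to $\convksum{k+1}$ solutions.

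An $O(N^{\ctwo{k}-\varepsilon})$-time algorithm for $\ew{\kpath{k}}$ applied to $G$ therefore solves $\convksum{k+1}$ in $O(n^{\ctwo{k}-\varepsilon})$ time, and chaining with Lemma~\ref{lem:convksum} produces the claimed $O(n^{\ctwo{k}-\varepsilon'})$-time algorithm for $\ksum{k+1}$. The only subtle step is the sentinel argument ruling out spurious cancellations between the $W$-terms and legitimate $x$-contributions; once $W$ is chosen large enough this is automatic. Confirming that the super-node sizes remain $O(n)$ across the reindexing, and that the reduction through Lemma~\ref{lem:convksum} is polylog-tight, is otherwise routine.
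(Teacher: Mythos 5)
Your proposal is correct and follows essentially the same route as the paper: reduce $\ksum{k+1}$ to $\convksum{k+1}$ via Lemma~\ref{lem:convksum}, then encode a $\convksum{k+1}$ instance in a single complete $\kpath{k}$-partite graph by indexing the $i$-th super-node with the partial sums $b_i = a_1 + \cdots + a_i$ and placing $x_{i+1,b'-b}$ on the edge $(v_{i,b},v_{i+1,b'})$. The only cosmetic differences are that the paper folds the weights of $L_1$ and $L_{k+1}$ into the first and last edges rather than using node weights, and uses a $-\infty$ sentinel where you use a large positive $W$; both are immaterial.
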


\begin{proof}
		We prove that an instance of $\convkpsum$ on $n$ numbers can be reduced to a 
		single instance of $\ew{\kpath{k}}$, and by applying 
		Lemma~\ref{lem:convksum}, this completes the proof.	  Given $k+1$ lists 
		$L_1,\ldots,L_{k+1}$ each with $n$ numbers as the input to $\convkpsum$, we 
		will construct a complete $\hpg{H}$ $G$ on $kn$ nodes. For every $r$ and $s$ 
		such that $r-s \in [n]$, for all $i \in [k]$, define the edge weights of $G$ 
		in the following manner.
	
	\[ w(v_{i,r}, v_{i+1,s}) =
		\begin{cases}
			x_{1,r} + x_{2,s-r}, &\text{if $i=1$} \\
			x_{i+1,s-r}, &\text{if $1 < i < k$} \\
			x_{k,s-r} + x_{k+1,s}, &\text{if $i=k$}
		\end{cases}
	\]

	Otherwise, if $r-s \not\in [n]$, we set $w(v_{i,r}, v_{i+1,s}) = -\infty$ for 
	all $i \in [k]$. Now to see the correctness of the reduction, take any 
	$H$-subgraph $\hsub{i}{a}{k}$ of $G$, and consider the $\ksol{(k+1)}$ 
	$\ksols{i}{b}{k+1}$, where $b_1=a_1$, $b_{k+1} = a_{k}$, and for $2 \leq i 
	\leq k$, $b_i=a_i-a_{i-1}$. First, note that the $\ksol{(k+1)}$ satisfies the 
	property that $b_1 + \ldots + b_{k} = b_{k+1}$. Now, note that its total 
	weight is $\sum_{i=1}^{k} w(\hsubv{i}{a}, \hsubv{i+1}{a}) = x_{1,a_1} + 
	x_{2,a_2-a_1} + x_{3,a_3 - a_2} + \ldots + x_{k-1,a_{k-1} - a_{k-2}} +  x_{k, 
	a_{k}-a_{k-1}} + x_{k+1,a_{k}} = \sum_{i=1}^{k+1} x_{i,b_i}$ which is exactly 
	the sum of the $\ksol{(k$+1$)}$. For the other direction, consider the 
	$\ksol{(k$+1$)}$ $\ksols{i}{b}{k+1}$ for which $b_{k+1} = b_1 
	+\ldots+b_{k+1}$. Then, the $H$-subgraph $\hsub{i}{a}{k}$, where $a_i = b_1 + 
	\ldots + b_i$, has total weight $\sum_{i=1}^{k+1} x_{i,b_i}$. Therefore, there 
	is a $\ksol{k}$ of sum $0$ iff there is an $H$-subgraph in $G$ of total weight 
	$0$. 
\end{proof}


\section{Relationships Between Subgraphs} \label{sec:vm}

In this section we prove Theorem~\ref{thm2} showing that 
$\ew{H_1}$ can be reduced to $\ew{H_2}$ if $H_1$ is a $\emph{\vertminor}$ of 
$H_2$. Then we give an additional 
observation that gives a reverse reduction. We start by defining $\vertminor$s.

\begin{definition}[Vertex-Minor]
\label{def:vm}
	A graph $H_1$ is called a $\vertminor$ of graph $H_2$, and denoted $H_1 \vm 
	H_2$ , if there exists a sequence of subgraphs $H^{(1)}, \ldots, H^{(\ell)}$ 
	such that $H_1 = H^{(1)}$, $H_2 = H^{(\ell)}$, and for every $i \in [\ell-1]$, 
	$H^{(i)}$ can be obtained from $H^{(i+1)}$ by either
	\begin{itemize}
		\item Deleting a single edge $e \in E(H^{(i+1)})$, or
		\item Contracting two nodes\footnote{The difference between our definition 
			of $\vertminor$ and the usual definition of a graph minor is that we allow 
			contracting two nodes that are not necessarily connected by an edge.} 
			$h_j,h_k \in V(H^{(i+1)})$ to one node $h_{jk} \in V(H^{(i)})$, such that 
			$N(h_{jk}) = N(h_j) \cup N(h_k)$.
	\end{itemize}
\end{definition}

To prove Theorem~\ref{thm2} it suffices to show how to reduce $\ew{H_1}$ to 
$\ew{H_2}$ when $H_1$ is obtained by either a single edge deletion or a single 
vertex contraction.
The edge deletion reduction is straightforward, and the major part of the proof will be 
showing the contraction reduction.
The main observation is that we can make two copies of nodes and change their 
node weights in a way such that any $H_2$-subgraph of total weight 0 that 
contains one of the copies will have to contain the other. This will allow us to 
claim that the subgraph obtained by replacing the two copies of a node with the 
original will be an $H_1$-subgraph of total weight 0.

\begin{lemma}
	\label{lemma:edgedel}
	Let $H$ be a subgraph you get after deleting an edge from $H'$. $\ew{H}$ on 
	$kn$ nodes can be reduced to a single instance of $\ew{H'}$ on $kn$ nodes.
\end{lemma}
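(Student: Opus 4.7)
The plan is to give a direct construction that turns an $\ew{H}$ instance into an $\ew{H'}$ instance on the same vertex set, with no blow-up in size, exploiting the fact that the only structural difference between $H$ and $H'$ is the single missing edge $e = (h_i, h_j) \in E(H') \setminus E(H)$.

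First I would take an arbitrary instance of $\ew{H}$: a complete $\hpg{H}$ $G$ on $kn$ vertices with super-nodes $\P_{h_1}, \ldots, \P_{h_k}$ and a weight function $w$ on $V(G) \cup E(G)$. I would then build a complete $\hpg{H'}$ $G'$ on the same vertex set, with the same super-node partition. Because $H'$ has the additional edge $(h_i, h_j)$, the graph $G'$ has one additional super-edge $(\P_{h_i}, \P_{h_j})$ beyond those of $G$; I would add all $n^2$ edges of this super-edge and assign them weight $0$, and leave every other node- and edge-weight identical to the one in $G$. This construction is clearly linear time and produces a graph on $kn$ nodes, matching the claimed reduction size.

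The correctness argument is then immediate. Any candidate solution corresponds to picking one vertex $v_{\ell,a_\ell}$ from each super-node, and this is simultaneously an $H$-subgraph of $G$ and an $H'$-subgraph of $G'$. Comparing weights, the $H'$-weight in $G'$ equals the $H$-weight in $G$ plus the weight of the single new edge $(v_{i,a_i}, v_{j,a_j})$, which I set to $0$. Hence the two weights coincide, so $G$ contains a zero-weight $H$-subgraph if and only if $G'$ contains a zero-weight $H'$-subgraph, and a single call to the $\ew{H'}$ oracle on $G'$ answers the original instance.

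I do not expect any real obstacle here; the whole point is that edge deletion in the pattern graph corresponds to simply not constraining a particular pair of super-nodes, which can be simulated in the richer pattern by padding that super-edge with zero weights. The more delicate case of contraction, which must preserve the fact that two ``copies'' of a node are forced to be consistent, will carry the conceptual weight of Theorem~\ref{thm2} and is handled separately.
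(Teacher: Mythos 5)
Your construction is exactly the one the paper uses: pad the single new super-edge $(\P_{h_i},\P_{h_j})$ with zero-weight edges, keep everything else unchanged, and observe that $H$-subgraphs of $G$ and $H'$-subgraphs of $G'$ are in weight-preserving bijection. The argument is correct and matches the paper's proof essentially verbatim.
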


\begin{proof}
	Without loss of generality, denote $V(H)=V(H')=\{h_1,\ldots, h_k\}, 
	E(H')=\{e_1,\ldots,e_m\}$, where $e_m=(h_{k-1},h_k)$, and 
	$E(H)=\{e_1,\ldots,e_{m-1}\}$.
	
	Given $G$, an $\hpg{H}$ as input to $\ew{H}$, we create an $\hpg{H'}$ $G'$ 
	which will have the same set of nodes as $G$, but will have an additional 
	super-edge $(h_{k-1},h_k)$ where all of the edges within this super-edge will 
	have weight 0. In other words, for all $a,b\in [n]$ define 
	$w(v_{(k-1),a},v_{k,b})=0$. Now, every $H$-subgraph in $G$ is an $H'$-subgraph 
	in $G'$ with the same total weight, and vice versa, which proves the 
	correctness of the reduction.
\end{proof}

\begin{lemma}
	\label{lemma:contraction}
	Let $H$ be a subgraph you get after contracting two nodes from $H'$. $\ew{H}$ 
	on $(k+1)n$ nodes can be reduced to a single instance of $\ew{H'}$ on $kn$ 
	nodes.
\end{lemma}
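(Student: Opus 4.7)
The plan is to reduce $\ew{H}$ on an $H$-partite graph $G$ with $k$ super-nodes of size $n$ to $\ew{H'}$ on an $H'$-partite graph $G'$ with $k{+}1$ super-nodes of size $n$ (the direction needed by Theorem~\ref{thm2}). Let $h_j, h_k \in V(H')$ be the two nodes of $H'$ that contract to $h_{jk} \in V(H)$. I would keep every super-node $\P_{h_i}$ with $i \notin \{j,k\}$ unchanged, and split $\P_{h_{jk}}$ into two twin super-nodes $\P'_{h_j}, \P'_{h_k}$ of $G'$, creating, for each $a \in [n]$, copies $v^j_a$ and $v^k_a$ of the original $v_{jk,a}$.

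The core trick, as foreshadowed in the paragraph before the lemma, is to force any weight-zero $H'$-subgraph of $G'$ to choose the same index in the two twin super-nodes. I would pick $M$ strictly larger in absolute value than any sum of $O(|V(H')|^2)$ weights from $G$ (for example, $M := 1 + \sum_{x \in V(G) \cup E(G)} |w(x)|$) and set $w'(v^j_a) := w(v_{jk,a}) + M\cdot a$ and $w'(v^k_a) := -M\cdot a$; other node weights are inherited unchanged. For the edges, for each neighbor $h_i$ of $h_{jk}$ in $H$ I would route the original edge weight $w(v_{jk,a}, v_{i,b})$ onto exactly one of the super-edges $(\P'_{h_j}, \P'_{h_i})$ or $(\P'_{h_k}, \P'_{h_i})$ (whichever is present in $H'$, breaking ties arbitrarily), zero out the other if it also exists, leave all edges between unchanged super-nodes as in $G$, and---if $(h_j, h_k) \in E(H')$---set every edge of the super-edge $(\P'_{h_j}, \P'_{h_k})$ to weight $0$.

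For correctness, the forward direction is immediate: a weight-$0$ $H$-subgraph $\{v_{i,a_i}\}$ in $G$ lifts to $\{v_{i,a_i}\}_{i\neq jk} \cup \{v^j_{a_{jk}}, v^k_{a_{jk}}\}$, whose two added node-weight contributions cancel as $M a_{jk} - M a_{jk} = 0$, and whose remaining node- and edge-weights match those in $G$ by construction. For the reverse direction, the total weight of any $H'$-subgraph of $G'$ that picks $v^j_{a_j}$ and $v^k_{a_k}$ decomposes as $W + M(a_j - a_k)$, where $W$ is a sum of at most $O(|V(H')|^2)$ original weights; by the choice of $M$, vanishing of this quantity forces $a_j = a_k$, and then $W = 0$ exactly says that $\{v_{i,a_i}\}_{i \neq jk} \cup \{v_{jk, a_j}\}$ is a weight-$0$ $H$-subgraph of $G$.

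The main obstacle will be verifying that this routing of edge weights does not break the faithful correspondence when $h_i$ is adjacent to both $h_j$ and $h_k$ in $H'$, or when $(h_j, h_k)$ is itself an edge of $H'$. Both cases are handled by the convention of placing the original weight on one super-edge and $0$ on the duplicate (and $0$ on $(\P'_{h_j}, \P'_{h_k})$), which guarantees that the quantity $W$ above is an honest sum over the edges of $H$, thereby validating both directions of the equivalence.
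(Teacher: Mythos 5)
Your proposal is correct and follows essentially the same approach as the paper's proof: split the contracted super-node into two twin copies and add $\pm M\cdot a$ to their node weights so that any weight-zero $H'$-subgraph is forced to pick matching indices in the twins. Your explicit convention for routing each original edge weight onto exactly one of the duplicated super-edges (and zeroing the twin--twin super-edge) is a slightly more careful treatment of the degenerate adjacency cases than the paper gives, but the argument is otherwise identical.
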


\begin{proof}
	Without loss of generality, denote $V(H')=\{h_1,\ldots, h_{k-1} \} \cup \{ 
	h_{\kone}, h_{\ktwo}\}, V(H)=\{ h_1,\ldots,h_{k-1}\} \cup \{ h_k \}$, and 
	assume you get $H$ from $H'$ by contracting the nodes $h_{\kone}, h_{\ktwo} 
	\in V(H')$ into the node $h_k \in V(H)$. Given $G$, an $\hpg{H}$, we create an 
	$\hpg{H'}$ $G'$ which will be almost the same as $G$, except that every vertex 
	$v_{k,a}$ in the $\kth$ partition of $G$ will have two copies in $G'$, one in 
	$\P_{h_{\kone}}$ and one in $\P_{h_{\ktwo}}$, which we call $v_{\kone,a}$ and 
	$v_{\ktwo,a}$, respectively. The weights in $G'$ will be the same as in $G$, 
	but we will add a unique integer $u_a$ for $a \in [n]$ to the weight of 
	$v_{\kone,a}$ and subtract $u_a$ from the weight of $v_{\ktwo,a}$. This will 
	ensure that in any $H'$-subgraph of total weight $0$, if $v_{\kone,a}$ is 
	picked, then $v_{\ktwo,a}$ must also be picked. This allows us to conclude 
	that any $H'$-subgraph of total weight $0$ in $G'$ will directly correspond to 
	an $H$-subgraph in $G$.
	
	Let $d=|E(H)|=|E(H')|$, $W$ the maximum weight of any edge or node in $G$,	
	and $K = (d+k+1) \cdot  W$. Create a complete $\hpg{H'}$ $G'$, and define the 
	edge weights $w':(E(G')\cup V(G')) \rightarrow \mathbb{Z}$ as follows. For 
	every $\se$ $(h_i, h_{\khat})$ where $\khat \in \{ \kone, \ktwo\}$, define 
	$w'(v_{i,a}, v_{\khat,b}) = w(v_{i,a}, v_{k,b})$. All other edges $(v_{i,a}, 
	v_{j,b}) \in E(H')$ will have weight $w'(v_{i,a}, v_{j,b}) = w(v_{i,a}, 
	v_{j,b})$. For the vertices, we will set $w'(v_{\kone,a}) = a \cdot K$ and 
	$w'(v_{\ktwo,a}) = -a \cdot K$ for all $a \in [n]$. All other vertices will 
	have weight $0$.

	Let $\chi = \hsub{i}{a}{k-1} \cup \{ v_{{\kone},a_{\kone}}, 
	v_{{\ktwo},a_{\ktwo}} \}$ be an $H'$-subgraph of $G'$ of total weight $0$. 
	First, we claim that $a_{\kone} = a_{\ktwo}$. This is true because the total 
	weight of the subgraph $\chi$ is $(a_{\kone} - a_{\ktwo})\cdot K + X$, where 
	$X$ represents the sum of $d$ edges and $k$ nodes, each of weight at most $W$. 
	Therefore, $X < (d+k+1) \cdot W = K$, which implies that $(a_{\kone} - 
	a_{\ktwo}) \cdot K + X = 0$ can happen only if $a_{\kone} - a_{\ktwo}=0$. 
	Second, note that the $H$-subgraph $\hsub{i}{a}{k}$ of $G$, where 
	$a_k=a_{\kone}=a_{\ktwo}$, will also have total weight $0$. This is because 
	the numbers added to the weights of the nodes $v_{k^{(1)}, a_{k^{(1)}}}$ and 
	$v_{k^{(1)}, a_{k^{(1)}}}$ cancel out, and all of the other weights involved 
	are defined to be the same as in $G$. Now for the other direction, note that 
	for any $H$-subgraph $\hsub{i}{a}{k}$ in $G$, the $H'$-subgraph $\{ v_{i,a_i} 
	\}_{i\in [k-1]} \cup \{ v_{{\kone},a_{\kone}}, v_{{\ktwo},a_{\ktwo}} \}$ in 
	$G'$, where $a_{\kone} = a_{\ktwo} = a_k$, will have the same total weight. 
	Therefore, there is an $H'$-subgraph of total weight $0$ in $G'$ if and only 
	if there is an $H$-subgraph of total weight $0$ in $G$. 
\end{proof}

Theorem~\ref{thm2} follows from these two lemmas, by the 
transitive property of our reducibility definition.

\subsection{Reverse Direction}

Next we give an observation which shows how $\ew{H_2}$ can be reduced to 
$\ew{H_1}$, where $H_2$ contains $H_1$ as an induced subgraph. This can be seen 
as a reversal of Theorem~\ref{thm2}, since $H_2$ is a larger graph.

\begin{proposition}
	\label{lemma:addingvertices}
	Let $H_2$ be the subgraph you get by adding a node to $H_1$ that has edges to 
	every node in $H_1$, and let $\alpha \geq 2$. Then, $\ew{H_2} 
	\rto{\alpha+1}{\alpha} \ew{H_1}$ .
\end{proposition}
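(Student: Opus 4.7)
The plan is to enumerate over the $n$ possible choices of vertex for the super-node $\P_{h_{k+1}}$ associated with the ``universal'' vertex $h_{k+1}$, and for each such choice reduce the remaining problem to a single instance of $\ew{H_1}$ on the other $k$ super-nodes. Since $h_{k+1}$ is adjacent to every vertex of $H_1$ inside $H_2$, once a particular vertex $v \in \P_{h_{k+1}}$ is fixed, the total contribution of $v$ to the weight of any $H_2$-subgraph containing it depends only on the node weight of $v$ together with the $k$ edge weights connecting $v$ into $\P_{h_1} \cup \cdots \cup \P_{h_k}$. All of this contribution can be absorbed into the node weights of an $H_1$-partite graph on $kn$ vertices.

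Concretely, given the complete $H_2$-partite input graph $G$ with weights $w$, for each $a \in [n]$ let $v = v_{k+1,a}$ and build the complete $H_1$-partite graph $G_v$ on $\bigcup_{i=1}^k \P_{h_i}$ by copying over all edge weights among these super-nodes unchanged, and defining new node weights
\[ \tilde w(v_{i,j}) = w(v_{i,j}) + w(v, v_{i,j}) \quad \text{for } i \in [k],\ j \in [n], \]
then additionally adding $w(v)$ to $\tilde w(v_{1,j})$ for every $j \in [n]$ in order to absorb the node weight of $v$ itself. A direct expansion of the subgraph-weight definition shows that for any $H_1$-subgraph $\chi' = \{v_{i,a_i}\}_{i \in [k]}$ of $G_v$ one has $\tilde w(\chi') = w(\chi' \cup \{v\})$, where the right-hand side is the weight of the corresponding $H_2$-subgraph of $G$. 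Hence $G$ contains an $H_2$-subgraph of weight $0$ iff for some $a \in [n]$ the graph $G_v$ contains an $H_1$-subgraph of weight $0$.

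For the running time, the edge structure of the $G_v$'s is identical across all $n$ iterations and can be prepared once in $O(n^2)$ time; each iteration then only needs to update $O(kn) = O(n)$ node weights before invoking the $\ew{H_1}$ oracle. An $O(n^{\alpha-\varepsilon})$ algorithm for $\ew{H_1}$ therefore yields an algorithm for $\ew{H_2}$ of total running time
\[ O\!\left(n^2 + n \cdot n^{\alpha - \varepsilon}\right) = O\!\left(n^{\alpha+1-\varepsilon'}\right) \]
for some $\varepsilon' > 0$, provided $\alpha \geq 2$; this is exactly the bound demanded by $\rto{\alpha+1}{\alpha}$. The argument is essentially bookkeeping, and the only step that needs care is verifying the cancellation between the adjusted node weights of $G_v$ and the contributions of $v$ in $G$; this cancellation is precisely what forces the hypothesis that $h_{k+1}$ be adjacent to \emph{every} node of $H_1$, since otherwise some edge weight incident to $v$ would have no corresponding absorption target.
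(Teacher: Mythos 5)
Your proposal is correct and follows essentially the same route as the paper: fix each candidate vertex $v_{k+1,a}$ of the universal super-node, fold its incident edge weights (and its own node weight) into the node weights of an $H_1$-partite instance, and make $n$ oracle calls of size $O(n)$, giving the $\rto{\alpha+1}{\alpha}$ bound. Your explicit absorption of $w(v)$ into one super-node's weights is a small bookkeeping detail the paper glosses over, but the construction and correctness argument are the same.
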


\begin{proof}
	Without loss of generality denote $V(H')=\{h_1,\ldots,h_k\}$, 
	$V(H)=\{h_1,\ldots,h_k,h_{k+1}\}$ and $E(H)=E(H')\cup 
	\{(h_i,h_{k+1})\}_{i=1}^k$. That is, $h_{k+1}$ is the new node, and it's 
	connected with edges to all the other nodes.
	
	To solve $\ew{H}$ on a complete $\hpg{H}$ $G$ on $(k+1)\cdot n$ nodes, we will 
	create $n$ instances of $\ew{H'}$, one for every vertex $v_{k+1,a}$ in the 
	$\sn$ $\P_{h_{k+1}}$. The instance will have a solution if and only if there 
	is an $H$-subgraph in $G$ of total weight $0$ that has node $v_{k+1,a}$ in it:
	
	For every $a\in [n]$, create an $\hpg{H'}$ $G'_a$ that will be the same as $G$ 
	on all $\sn$s and all $\se$s that do not involve $h_{k+1}$, but will have the 
	following additional weights:
	For every $i \in [k]$ and $b \in [n]$, we will add the weight of the edge 
	$(v_{i,b},v_{k+1,a})$ in $G$, to the weight of node $v_{i,b}$ in $G'$. 
	
	Now observe that every $H'$-subgraph in $G'_a$, $\hsub{i}{a}{k}$ will have 
	exactly the same weight as the $H$-subgraph of $G$ which is 
	$\hsub{i}{a}{k+1}$, where $a_{k+1}=a$. And therefore, there is an 
	$H'$-subgraph in $G'_a$ of total weight $0$, if and only if there is an 
	$H$-subgraph of $G$ which has the vertex $v_{k+1,a}$ and has weight 0.
\end{proof}


\vspace{-0.1in}

\section{Reductions to $\ksum{k}$ (and Upper Bounds)} \label{sec:algs}

In this section we show how $\ksum{k}$ can be used to solve $\ews$ subgraph 
problems. First, we show how the standard reduction from clique detection to 
$\ksum{k}$ can be generalized to relate $\ksum{k}$ to $\ews$ subgraph problems. 
However, this reduction does not give non-trivial implications for most 
subgraphs. Then, we show how to use a $\ksum{2}$ algorithm to solve $\ew{H}$ for 
any subgraph $H$. This gives us a generic algorithm for solving $\ew{H}$, which 
we call the $\sepalg$. Finally, we generalize this algorithm in a way that 
allows it to be phrased as a reduction from $\ew{H}$ to $\ksum{d}$ for any $d 
\leq k-1$, where $k = |V(H)|$. The bounds achieved by the algorithms depend on 
the structure of $H$.

In \cite{DF95}, a reduction that maps an unweighted $\kclique{k}$ detection 
instance to a $\ksum{k \choose 2}$ instance on $n^2$ numbers is given in order 
to prove that $\ksum{k}$ is $W[1]$-hard. The reduction maps each edge to a 
number that encodes the two vertices that are adjacent to the edge in a way such 
that the numbers encoded in the edges corresponding to a $\kclique{k}$, when 
summed, cancel out to $0$. In \cite{JV13}, the authors show how the same idea 
can be applied to show that triangle detection can be reduced to $\kxor{3}$ on 
$n^2$ vectors. We show that $\ew{H}$ can be reduced to $\ksum{d}$ on $n^2$ 
numbers, where $d = |E(H)|$.
 
\begin{proposition}
\label{prop:folk}
Let $H$ be a subgraph on $k$ nodes and $d$ edges, and let $\alpha \geq 2$. Then, 
$\ew{H} \rto{2\cdot\alpha}{\alpha} \ksum{d}$.
Moreover, the $\ew{H}$ problem on graphs with $m$ edges can be reduced to 
$\ksum{d}$ on $m$ numbers. \end{proposition}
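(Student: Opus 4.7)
The plan is to generalize the classical Downey--Fellows reduction from unweighted $\kclique{k}$ detection to $\ksum{\binom{k}{2}}$ so that it handles arbitrary $H$ and edge weights simultaneously. I would build one list $L_e$ per edge $e \in E(H)$, with numbers in $L_e$ in one-to-one correspondence with edges of $G$ lying in the super-edge associated with $e$. Each number must encode both the weight of the corresponding edge in $G$ and identifier information for its two endpoints, so that the only zero sums come from $d$-tuples that (a) consistently choose the same vertex at each $h_i \in V(H)$ across all its incident edges and (b) have edge weights summing to $0$.

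I would realize this via a base-$M$ positional encoding with one low-order slot for weight and, for each vertex $h_i$ of degree $d_i \ge 2$ in $H$, exactly $d_i - 1$ high-order slots $s_{i,1},\ldots,s_{i,d_i-1}$, one per non-pivot edge after arbitrarily picking a pivot $e_i^\ast$. For each edge $e=(h_i,h_j)\in E(H)$ and pair $(u,v)\in \P_{h_i}\times \P_{h_j}$, the entry in $L_e$ is
\[
w(u,v) \;+\; \delta_i^{e}(u) \;+\; \delta_j^{e}(v),
\]
where $\delta_i^e(u) = u\cdot M^{s_{i,k}}$ if $e$ is the $k$-th non-pivot edge at $h_i$, and $\delta_i^e(u) = -u\cdot \sum_{k=1}^{d_i - 1} M^{s_{i,k}}$ if $e = e_i^\ast$ (and $0$ if $d_i = 1$). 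Choosing $M$ larger than $n$ and larger than $2d$ times the maximum absolute edge weight guarantees the slots do not interfere: slot $s_{i,k}$ in the sum of $d$ chosen entries equals $(u_{e_{i,k}} - u_{e_i^\ast})\cdot M^{s_{i,k}}$, which vanishes iff the choices for $h_i$ on $e_{i,k}$ and on the pivot agree, while slot $0$ reads off $\sum_e w(u_e,v_e)$ exactly.

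From here the equivalence is a direct check: a $\ksum{d}$ solution exists iff all identifier slots vanish (forcing endpoint choices to be consistent at each $h_i$, hence yielding a bona fide $H$-subgraph) \emph{and} the weight sum is $0$. Each list has at most $n^2$ entries, so we obtain a single $\ksum{d}$ instance on $N = n^2$ numbers. If $\ksum{d}$ can be solved in $O(N^{\alpha-\epsilon})$, then $\ew{H}$ on a graph with $n$-vertex super-nodes runs in $O(n^{2\alpha-2\epsilon})$, which gives $\ew{H} \rto{2\alpha}{\alpha} \ksum{d}$ with $\delta = 2\epsilon$. For the ``moreover'' statement, if the input graph has only $m$ edges in total, we place an entry in $L_e$ only for those pairs $(u,v)$ that are actually edges of $G$, so $\sum_e |L_e| \le m$; padding each list up to length $m$ with dummy values that are too large in magnitude to ever participate in a zero sum yields the desired $\ksum{d}$ instance on $m$ numbers per list.

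The main technical obstacle is the identifier scheme: it has to enforce consistency at every vertex using only $O(d)$ slots in total, leave the weight slot untouched so that $\sum_e w(u_e,v_e)$ is read off faithfully, and keep the constructed numbers polynomially bounded in $n$ and the edge weights so the resulting $\ksum{d}$ instance has the correct size parameter under the paper's convention (number of integers times $\log M$). The pivot-and-cancel scheme above achieves all three at once, using exactly $\sum_i(d_i - 1) = 2d - k$ identifier slots; everything else is routine bookkeeping.
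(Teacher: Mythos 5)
Your construction is correct and the bookkeeping matches the proposition ($N=n^2$ numbers per list, hence an $O(N^{\alpha-\epsilon})$ algorithm for $\ksum{d}$ gives $O(n^{2\alpha-2\epsilon})$ for $\ew{H}$, and the sparse variant with $m$ numbers), but you arrive there by a genuinely different route. The paper does not re-derive the Downey--Fellows encoding at all: it observes that any $H$ with $d$ edges is a $\vertminor$ of $\kmatching{d}$ (contract the copies of each vertex of $H$ that appear in distinct matching edges), so Theorem~\ref{thm2} gives $\ew{H} \rto{\alpha}{\alpha} \ew{\kmatching{d}}$ with sparsity preserved, and then applies Theorem~\ref{thm3} with the $d$-separator $S=\emptyset$ and $H_i$ equal to the endpoints of the $i$-th matching edge, yielding a single $\ksum{d}$ instance on $n^2$ (or $m$) numbers. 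Your pivot-and-cancel identifier slots are, in effect, an inlined and flattened version of the contraction step (Lemma~\ref{lemma:contraction}), which enforces agreement between the two copies of a contracted vertex by adding $\pm a\cdot K$ to their node weights for a sufficiently large $K$; chaining the $2d-k$ contractions needed to go from $\kmatching{d}$ back to $H$ produces exactly your positional encoding. What your version buys is a self-contained, single-shot construction with explicit control over the magnitude of the generated integers and no $\polylog$ overhead from invoking the general machinery; what the paper's version buys is a two-line proof that reuses already-proven lemmas. Two small points to patch in yours: (i) $\ew{H}$ as defined here carries node weights as well as edge weights, which your entries omit --- fold $w(u)$ into the weight slot of the entries of one designated edge incident to $h_i$ (say the pivot), where the consistency constraints guarantee it is counted exactly once; (ii) for the non-interference argument you want $M>2n$ rather than $M>n$, so that balanced-radix coefficients lying in $[-(n-1),n-1]$ are uniquely determined --- a trivial constant adjustment.
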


\begin{proof}
We give a simple proof that uses out techniques.
First note that any $H$ with $d$ edges is a $\vertminor$ of the $\kmatching{d}$ 
subgraph, and therefore by Theorem~\ref{thm2},  $\ew{H} \rto{\alpha}{\alpha} 
\ew{\kmatching{d}}$, and note that in our reduction, the sparsity of the graph 
is preserved. Then use Theorem~\ref{thm3} to reduce $\ew{\kmatching{d}}$ to 
$\ksum{d}$, by choosing $S=\emptyset$ and $H_i$ to be the two endpoints of the 
$\ith$ edge, and observe that we get a $\ksum{d}$ instance on $m$ numbers.
\end{proof}

\paragraph{The Separator Algorithm.} We will say that $(S, H_1, H_2)$ is a 
\emph{separator} of a graph $H$ iff $S, H_1, H_2$ partition $V(H)$ and there are 
no edges between a vertex in $H_1$ and a vertex in $H_2$. The set of all 
separators of $H$ will be denoted as $\sepset(H)$. Consider the following 
algorithm to solve $\ew{H}$. We will call this algorithm the \emph{separator 
algorithm}. First, find \[ (S, H_1, H_2) = \argmin_{(S', H_1', H_2') \in 
\sepset(H)} (|S'| + \max(|H_1'|, |H_2'|)) \] by naively brute-forcing over all 
$3^k$ possible choices for $S$, $H_1$, and $H_2$. Then, pick an $S$-subgraph 
$\chi_S = \hsubsia{S}$. Construct a $\ksum{2}$ instance with target weight 
$w(\chi_S)$ and lists $L_1$ and $L_2$ constructed as follows: For every 
$H_1$-subgraph $\chi_{H_1} = \hsubsia{H_1}$, add $w(\chi_{H_1} \cup \chi_S)$ to 
$L_1$. Similarly, for every $H_2$-subgraph $\chi_{H_2} = \hsubsia{H_2}$, add 
$w(\chi_{H_2} \cup \chi_S)$ to $L_2$. We create an instance of $\ksum{2}$ for 
each possible $S$-subgraph $\chi_S$. The algorithm outputs that there is an 
$H$-subgraph of total weight $0$ iff some $\ksum{2}$ instance has a solution. 
The running time of this algorithm is $O(3^k + n^{|S|} \cdot (n^{|H_1|} + 
n^{|H_2|}))$. We will call $\gamma(H) = |S| + \max(|H_1|, |H_2|)$.

	To see that the separator algorithm solves $\ew{H}$ in time 
	$O(n^{\gamma(H)})$,
suppose there is an $H$-subgraph $\hsub{i}{a}{k} = \hsubsia{S} \cup 
\hsubsia{H_1} \cup \hsubsia{H_2}$ of weight $0$. Then, the $\ksum{2}$ instance 
corresponding to $\chi_S = \hsubsia{S}$ will contain an integer $w(\hsubsia{H_1} 
\cup \hsubsia{S}) \in L_1$ and $w(\hsubsia{H_2} \cup \hsubsia{S}) \in L_2$.

Since $S, H_1, H_2$ partition $V(H)$, we can write the sum of these two integers 
as \[ w(\hsubsia{S} \cup \hsubsia{H_1} \cup \hsubsia{H_2}) + w( \hsubsia{S}). \] 
Note that the term on the left is equal to $w(\hsubsia{V(H)})$, which by our 
assumption is $0$. Thus, the sum of these two integers is equal to the target, 
$w( \hsubsia{S})$.

For the other direction, let $\chi_S = \hsubsia{S}$ be the corresponding 
$S$-subgraph to the $\ksum{2}$ instance which has a solution of the form $w( 
\hsubsia{H_1} \cup \hsubsia{S}) \in L_1$ and $w(\hsubsia{H_2} \cup 
\hsubsia{S})\in L_2$. Since $S,H_1,H_2$ partition $V(H)$, we can write the sum 
of these two integers as:
\[ w( \hsubsia{H_1} \cup \hsubsia{S} \cup \hsubsia{H_2}) + w(\hsubsia{S}). \]
Since the target of the $\ksum{2}$ instance is $w( \hsubsia{S})$, we have that 
$w(\hsubsia{H_1} \cup \hsubsia{S} \cup \hsubsia{H_2}) = w(\hsubsia{V(H)}) = 0$. 

\begin{remark}
	The $\sepalg$ is quite simple, yet we are not aware of any subgraph $H$ for 
	which there is an algorithm that solves $\ew{H}$ in time $O(n^{\gamma(H) - 
	\varepsilon})$, for some $\varepsilon>0$. We have given examples of subgraphs 
	for which improving on the $\sepalg$ is known to imply that the $\ksumconj{k}$ 
	is false, and some for which this implication is not known.
\end{remark}

\paragraph{Generalizing the Separator Algorithm.} We can view the separator 
algorithm as an algorithm which finds the optimal way to ``break'' $H$ into two 
subgraphs $H_1$ and $H_2$, and enumerates all instances of $H_1$ and $H_2$ 
independently, and then solves $2$-sum instances to combine the edge-disjoint 
subgraphs. One natural way to generalize this algorithm is to consider what 
happens when we divide $H$ into $d$ subgraphs $H_1, \ldots, H_d$. Then, by a 
similar algorithm, one can use $\ksum{d}$ to solve the $\ew{H}$ problem. This 
generalization is of interest due to the fact that it implies that faster 
$\ksum{d}$ algorithms imply faster algorithms for $\ew{H}$.

We will say that $(S, H_1, \ldots, H_d)$ is a \emph{$d$-separator} iff $S, H_1, 
\ldots, H_d$ partition $V(H)$ and there are no edges between a vertex in $H_i$ 
and a vertex in $H_j$ for any distinct $i,j \in [1,d]$. The set of all 
$d$-separators of $H$ will be denoted as $\sepset^d(H)$.


\paragraph{Reminder of Theorem~\ref{thm3}:} Let $(S, H_1, \ldots, H_d)$ be a 
$d$-separator of $H$. Then, $\eweight{H}$ can be reduced to $\Ot(n^{|S|})$ 
instances of $\ksum{d}$ each on $\max\{n^{|H_1|}, \ldots, n^{|H_d|} \}$ numbers.
\begin{proof}[of Theorem~\ref{thm3}]
We can generalize the separator algorithm to hold for arbitrary $d$-separators. 
Pick an $S$-subgraph $\chi_S = \hsubsia{S}$. Construct a $\ksum{d}$ instance 
with target weight $(d-1) \cdot w(\chi_S)$ and lists $L_1, \ldots, L_d$ 
constructed as follows: For all $j \in [d]$, for every $H_j$-subgraph 
$\chi_{H_j} = \hsubsia{H_j}$, add $w(\chi_{H_j} \cup \chi_S)$ to $L_j$. We 
create an instance of $\ksum{d}$ for each possible $S$-subgraph $\chi_S$. The 
algorithm outputs that there is an $H$-subgraph of total weight $0$ iff some 
$\ksum{d}$ instance has a solution. The proof of correctness for this reduction 
follows similarly to the proof of correctness for the separator algorithm. The 
$\Ot(\cdot)$ in the number of instances comes from the $\eweight{H} \rto{}{} 
\ew{H}$ reduction.

\end{proof}

\begin{corollary}
	\label{cor:IS}
	Let $H$ be a graph on $k$ nodes and let $I$ be an independent set of $H$ where 
	$|I| = s$. Then, $\ew{H}$ can be reduced to $O(n ^ { k - s})$ instances of 
	$\ksum{s}$ on $n$ integers.
\end{corollary}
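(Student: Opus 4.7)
The plan is to derive the corollary by applying Theorem~\ref{thm3} with an $s$-separator built directly from the independent set $I$. Let $I = \{h_{i_1}, \ldots, h_{i_s}\} \subseteq V(H)$ be the given independent set, set $S = V(H) \setminus I$, and let $H_j = \{h_{i_j}\}$ for each $j \in [s]$. By construction $|S| = k - s$ and each $|H_j| = 1$, and the sets $S, H_1, \ldots, H_s$ clearly partition $V(H)$.

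Next, I would verify that $(S, H_1, \ldots, H_s)$ is a valid $s$-separator of $H$ in the sense required by Theorem~\ref{thm3}. We need that for any distinct $j, j' \in [s]$ there are no edges between a vertex of $H_j$ and a vertex of $H_{j'}$. Since each $H_j$ is just the singleton $\{h_{i_j}\}$, this amounts to asking that $(h_{i_j}, h_{i_{j'}}) \notin E(H)$ whenever $j \ne j'$, which is precisely the defining property of the independent set $I$. So the separator is valid.

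Finally, applying Theorem~\ref{thm3} with this separator gives a reduction of $\ew{H}$ to $O(n^{|S|}) = O(n^{k-s})$ instances of $\ksum{s}$, each on $\max\{n^{|H_1|}, \ldots, n^{|H_s|}\} = n$ integers, which is exactly the claimed bound. Note that the $\tilde{O}$ appearing in the statement of Theorem~\ref{thm3} is introduced only by the $\eweight{H} \to \ew{H}$ passage of Lemma~\ref{lem:ewhequiv}; the underlying separator algorithm enumerates exactly $n^{|S|}$ choices of $S$-subgraph, so for $\ew{H}$ itself no $\polylog$ overhead is incurred.

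The main step here is purely conceptual: recognizing that an independent set of size $s$ is the natural object to turn into an $s$-separator whose pieces are singletons and whose remaining part has size $k - s$. There is no real technical obstacle, since Theorem~\ref{thm3} already packages the algorithmic work; the corollary is essentially a bookkeeping consequence of choosing the separator optimally adapted to $I$.
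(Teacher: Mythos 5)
Your proof is correct and matches the paper's own argument exactly: the paper also takes $S = V(H)\setminus I$ with each $H_j$ a singleton vertex of $I$ and invokes Theorem~\ref{thm3}. You have simply spelled out the verification of the separator property and the parameter bookkeeping that the paper leaves implicit.
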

\begin{proof}
	Consider the separator $(S,H_1,\ldots,H_s)$, where $S=V(H)\setminus I$, and 
	$H_i$ is a singleton containing the $i^{\text{th}}$ vertex in $I$.
	
\end{proof}

\begin{corollary}
	\label{cor:TW}
	Let $H$ be a graph on $k$ nodes with treewidth bounded by $tw$. Then, $\ew{H}$ 
	can be solved in time $O(n^{\frac{2}{3}k+tw})$.
\end{corollary}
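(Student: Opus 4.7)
My plan is to apply Corollary~\ref{cor4} with a balanced vertex separator obtained from the treewidth structure of $H$. I will invoke the classical balanced-separator theorem for bounded treewidth: any graph $H$ of treewidth at most $tw$ on $k$ vertices admits a vertex set $S$ with $|S| \leq tw + 1$ and a partition $V(H) \setminus S = H_1 \cup H_2$ (with no edges between $H_1$ and $H_2$) satisfying $\max(|H_1|,|H_2|) \leq \tfrac{2k}{3}$. Plugging such a $2$-separator into Corollary~\ref{cor4} gives
\[
\gamma(H) \;\leq\; |S| + \max(|H_1|,|H_2|) \;\leq\; tw + \tfrac{2k}{3} + O(1),
\]
and hence an algorithm for $\ew{H}$ running in time $O(n^{\frac{2}{3}k + tw})$ (absorbing the additive $O(1)$ slack in the exponent and the polylogarithmic overhead of Corollary~\ref{cor4} into the $O(\cdot)$ notation, treating $k$ and $tw$ as constants).

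To produce such a separator I use the standard centroid argument on an optimal width-$tw$ tree decomposition $(T, \{B_v\}_{v \in V(T)})$ of $H$. Root $T$ arbitrarily and, for each $u \in V(H)$, let $T_u \subseteq V(T)$ be the (connected) subtree of nodes whose bag contains $u$. Weight each $v \in V(T)$ by $\omega(v) = |\{u \in V(H) : v \text{ is the node of } T_u \text{ closest to the root}\}|$, so that $\sum_v \omega(v) = k$, and choose a centroid $v^*$ of the weighted tree, i.e.\ a node such that every subtree of $T - v^*$ has $\omega$-weight at most $k/2$. Setting $S := B_{v^*}$ yields $|S| \leq tw+1$. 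For each $u \in V(H) \setminus S$ the subtree $T_u$ avoids $v^*$ and is connected, so $T_u$ lies in a unique subtree of $T - v^*$; assigning $u$ to that subtree partitions $V(H) \setminus S$ into groups $G_1, \ldots, G_m$ of size $|G_i| \leq \omega(T_i) \leq k/2$. A standard balanced-merge of these groups into two classes --- repeatedly placing the next $G_i$ into the currently lighter side --- then produces $H_1, H_2$ with $\max(|H_1|,|H_2|) \leq \tfrac{2k}{3}$, using that each individual $G_i$ has size at most $k/2$.

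The separating property --- that no edge of $H$ goes between $H_1$ and $H_2$ --- follows immediately from the tree-decomposition axioms: every edge of $H$ has both endpoints inside some common bag, and any bag not containing $v^*$ is contained in a single subtree of $T - v^*$, and hence in a single $G_i$. The main step where care is needed is the centroid/balanced-merge combinatorics on the weighted tree decomposition; this is a standard piece of treewidth machinery, and once it is in place the statement follows by a direct plug-in to Corollary~\ref{cor4}.
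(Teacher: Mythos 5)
Your overall strategy is exactly the paper's: extract a balanced separator from a width-$tw$ tree decomposition, merge the resulting components into two sides of size at most $\tfrac{2}{3}k$, and feed the resulting $2$-separator to Corollary~\ref{cor4}. The paper's own proof is a two-line sketch of this, and your centroid construction (weighting each tree node by the number of vertices whose subtree is rooted there, taking the weighted centroid bag as $S$, and assigning each remaining vertex to the component of $T-v^*$ containing its subtree) is a correct and welcome elaboration of the part the paper leaves implicit. However, two of your steps do not go through as written. The balanced merge in \emph{arbitrary order} does not give the claimed bound: with components of sizes roughly $k/4$, $k/4$, $k/2$ encountered in that order, the first two land on opposite sides and the third joins one of them, producing a side of size about $3k/4 > \tfrac{2}{3}k$. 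The hypothesis that each $G_i$ has size at most $k/2$ is not by itself enough for unsorted greedy. The standard fix is to process the groups in decreasing order of size: if the last group $g$ added to the heavier side is not the only group on that side, then the weight $w$ of that side just before the addition satisfies $g \le w \le T - |H_1|$ (where $T$ is the total weight), so $|H_1| = w + g \le 2(T-|H_1|)$ and $|H_1| \le \tfrac{2}{3}T \le \tfrac{2}{3}k$; if $g$ is alone on its side, $|H_1| = g \le k/2$.

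The second issue is the exponent. Your separator is a bag of a width-$tw$ decomposition, so you only get $|S| \le tw+1$, and Corollary~\ref{cor4} then yields $\Ot(n^{\frac{2}{3}k + tw + 1})$. An additive $O(1)$ in the exponent is a multiplicative factor of $n^{O(1)}$; it cannot be ``absorbed into the $O(\cdot)$ notation'' even for constant $k$ and $tw$, so that justification is invalid. To be fair, the paper's own proof asserts a $d$-separator ``of size $tw$'' with no more justification than you provide, so the stated bound is equally unsupported there; but as a matter of logic you should either settle for the exponent $\tfrac{2}{3}k + tw + 1$ or supply an argument producing a balanced separator of size $tw$ rather than $tw+1$.
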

\begin{proof}
Observe that there will be a $d$-separator of size $tw$, for some $d>1$, where 
each of the $d$ disconnected components has at most $k/2$ nodes, and therefore 
can be separated into two disconnected components with at most $2k/3$ nodes 
each.	Thus, $\gamma(H) \leq 2k/3 + tw$.

\end{proof}

\begin{remark}
\label{rem:sepalgopt}
Note that under the current best known running times for $\ksum{k}$, the 
$\sepalg$ (of Corollary~\ref{cor4}) will always be at least as good as the 
algorithm one can get from Theorem~\ref{thm3}. This is implied by the fact that 
the fastest known way to solve $\ksum{k}$ is by a reduction to $\ksum{2}$. 
However, if it turns out that there exists a $k_0$ for which $\ksum{k_0}$ can be 
solved fast enough, the algorithm of Theorem~\ref{thm3} can be faster than the 
$\sepalg$. As an example, assume $\ksum{3}$ can be solved in linear time, and 
$H$ is a subgraph composed of $3$ disconnected $k/3$-node cliques.
\end{remark}

Now notice that if one wanted to find the minimum total weight of an 
$H$-subgraph in the input graph $G$, the same procedure can be applied, with a 
slight modification that makes it more efficient.
When going over an $S$-subgraphs $\chi_S$ of $G$, instead of solving $\ksum{d}$ 
on the $d$ lists $L_1,\ldots,L_d$, it is enough to find the minimum number in 
each list. Observe that the sum of these numbers, minus $(d-1)\cdot w(\chi_S)$, 
equals the minimum total weight of an $H$-subgraph in $G$ that uses the nodes in 
$\chi_S$.  Therefore, by going over all $S$-subgraphs, and taking the minimum of 
these numbers, one gets the minimum total weight of an $H$-subgraph in $G$. The 
running time of this modified procedure is 
$O(n^{|S|}\cdot(n^{|H_1|}+\cdots+n^{|H_d|}))$.

	\paragraph{Reminder of Theorem~\ref{thm4}:}
	Let $H$ be a subgraph on $k$ nodes, with independent set of size $s$. Given a 
	graph $G$ on $n$ nodes with node and edge weights, the minimum total weight of 
	a (not necessarily induced) subgraph of $G$ that is isomorphic to $H$ can be 
	found in time $\Ot(n^{k-s+1})$.

\begin{proof}
	First, observe that by our proof of the reduction $\eweight{H} \rto{}{} 
	\ew{H}$ in Appendix~\ref{app:hpg}, an algorithm for the minimization problem 
	that assumes the graph is $H$-partite yields an algorithm for the original 
	problem with the same running time, up to $k^k\cdot \poly\log n$ factors.
	Then, use the procedure mentioned above where $(S,H_1,\ldots,H_d)$ are as in 
	the proof of Corollary~\ref{cor:IS}, to solve the structured version of the 
	problem in time $O(n^{k-s}\cdot n)$.

\end{proof}


\vspace{-0.1in}

\section{Conclusions}

We conclude with two interesting open questions:
\begin{enumerate}
\item Perhaps the simplest subgraph for which we cannot give tight lower and 
	upper bounds is the $\kcycle{5}$ subgraph. Can we achieve $O(n^{4-\varepsilon})$ for some $\varepsilon>0$ without 
	breaking the $\ksumconj{k}$, or can we prove that it is not possible?
\item Can we prove that $\eweight{\kpath{4}} \rto{3}{3} \eweight{\kstar{3}}$? 
	This would show that breaking the $\ksumconj{3}$ will imply an $O(n^{3-\varepsilon})$ for some $\varepsilon>0$ 
	algorithm for $\apsp$.
\end{enumerate}

\paragraph{Acknowledgements.}
The authors would like to thank Ryan and Virginia Williams for many helpful 
discussions and for sharing their insights, and Hart Montgomery for initiating 
the conversation that led up to this work. We would also like to thank the 
anonymous reviewers for their comments and suggestions.

\bibliographystyle{plain}
\normalsize \bibliography{eweightfullversion}

\appendix

\section{Reducibility}\label{app:reducibility}

Our definition of reducibility is a mild extension of the definition of sub 
cubic reducibility in \cite{WW10}(Definition C.1). In weighted graph problems 
where the weights are integers in $[-M,M]$, $n$ will refer to the number of 
nodes times $\log M$. For $\ksum{k}$ problems where the input integers are in 
$[-M,M]$, $n$ will refer to the number of integers times $\log M$.

\begin{definition}
	Let $A$ and $B$ be two decision problems. We say that $A \rto{a}{b} B$, if 
	there is an algorithm $\mathcal{A}$ with oracle access to $B$, such that for 
	every $\varepsilon>0$ there is a $\delta>0$ satisfying three properties:
	\begin{itemize}
	\item For every instance $x$ of $A$, $\mathcal{A}$ solves the problem $A$ on 
		$x$  probability $1-o(1)$.
	\item $\mathcal{A}$ runs in time $O(n^{a-\delta})$ time on instances of size 
		$n$.
	\item For every instance $x$ of $A$ of size $n$, let $n_i$ be the size of the 
		$\ith$ oracle access to $B$ in $\mathcal{A}(x)$. Then $\sum_i 
		n_i^{b-\varepsilon} \leq n^{a-\delta}$.
	\end{itemize}
	
\end{definition}

The proofs of Propositions 1 and 2 in \cite{WW10}, prove that this definition 
has the following two properties that we will use:
\begin{itemize}
\item Let $A,B,C$ be problems so that $A \rto{a}{b} B$ and $B \rto{b}{c} C$, 
	then $A \rto{a}{c} C$.
\item If $A \rto{a}{b} B$ then an $O(n^{b-\varepsilon})$ algorithm for $B$ for 
	some $\varepsilon>0$, implies an $O(n^{a-\delta})$ algorithm for $A$ for some 
	$\delta>0$, that succeeds with probability $1-o(1)$.
\end{itemize}


\section{Proof of Lemma~\ref{lem:ewhequiv}} \label{app:hpg}

\begin{claim}
	$\ew{H} \rto{\alpha}{\alpha} \eweight{H}$.
\end{claim}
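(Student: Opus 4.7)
The plan is to construct from an instance $G$ of $\ew{H}$ an instance $G'$ of $\eweight{H}$ of essentially the same size by overlaying ``super-node labels'' on the node weights, so that only subgraphs of $G'$ picking exactly one vertex from each super-node $\P_{h_i}$ can hit the target weight. Concretely, I would keep the vertex and edge sets of $G$ unchanged, fix a parameter $N$ large enough in terms of $k$ and $M = \max |w(\cdot)|$, and for each vertex $v \in \P_{h_i}$ add $W_i = N^i$ to its node weight. The $\eweight{H}$ oracle is then called on $G'$ with target $t = \sum_{i=1}^k N^i$. The new weights still fit in $O(k \log(kM))$ bits, so this is a size-preserving reduction, and an $O(n^{\alpha - \varepsilon})$ algorithm for $\eweight{H}$ yields an $O(n^{\alpha - \delta})$ algorithm for $\ew{H}$.

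For correctness, the forward direction is immediate: any $H$-subgraph $\hsub{i}{a}{k}$ of $G$ of weight $0$ is, as a vertex set, a subgraph of $G'$ isomorphic to $H$, now of weight $0 + \sum_{i=1}^k W_i = t$. For the reverse, let $S \subseteq V(G')$ be a subgraph of $G'$ isomorphic to $H$ of weight $t$, witnessed by an isomorphism $\phi : V(H) \to S$, and for each $i$ let $\pi(i) \in [k]$ index the super-node containing $\phi(h_i)$. Then $w_G(S) + \sum_{i=1}^k N^{\pi(i)} = \sum_{i=1}^k N^i$, which rearranges to $\sum_{j=1}^k (c_j - 1) N^j = -w_G(S)$ where $c_j = |\pi^{-1}(j)|$ are nonnegative integers summing to $k$. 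Since $|w_G(S)|$ is polynomial in $k$ and $M$ while the left-hand side, if not identically zero, has magnitude comparable to $N^{j^*}$ (where $j^*$ is the largest index with $c_{j^*} \neq 1$), a standard base-$N$ argument forces $c_j = 1$ for all $j$, making $\pi$ a permutation of $[k]$.

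Once $\pi$ is a permutation, $S$ picks exactly one vertex from each super-node. Since every edge of $G$ lies inside a super-edge of the $\hpg{H}$ structure and $\phi$ carries edges of $H$ to edges of $G$, an easy counting argument shows $\pi$ is in fact an automorphism of $H$; thus $S$ coincides as a vertex set with a canonical $H$-subgraph $\hsub{j}{b}{k}$ defined by $\phi(h_{\pi^{-1}(j)}) = v_{j,b_j}$. Because the label overlay depends only on the super-node and $\pi$ induces a bijection on $E(H)$, both the node sum and the edge sum of this canonical $H$-subgraph match those of $S$ term-by-term, so its weight in $G$ equals $w_G(S) = 0$. I expect the main subtle step to be this last automorphism bookkeeping---verifying that the edge-homomorphism property promoted to an automorphism and that relabeling preserves the weight sums---but both facts follow directly from the $H$-partite structure of $G$.
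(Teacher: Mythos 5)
Your overall strategy---overlaying positional labels in a high base $N$ so that only copies of $H$ hitting exactly one vertex per part can reach the target, then cleaning up with an automorphism argument---is the same idea the paper uses, and your final ``automorphism bookkeeping'' step is actually spelled out more carefully than in the paper's own proof. But there is a genuine gap at the very first step: $\eweight{H}$ is defined on \emph{edge-weighted} graphs, and the oracle decides whether the sum of the \emph{edge} weights of a copy of $H$ equals $t$. Your reduction leaves the node weights of the $\ew{H}$ instance in place and, worse, installs the distinguishing labels $N^i$ \emph{on the node weights}, where the $\eweight{H}$ oracle never sees them. Every equation in your correctness argument involves $w_G(S)$ including node contributions and the terms $\sum_i N^{\pi(i)}$, none of which the oracle actually adds up, so the reduction as written does not produce a valid instance of the target problem.

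The fix is to transfer all node weights onto incident edges, which is exactly why the paper assumes $H$ has no isolated vertices: for each vertex $v_{i,a}$ one adds $w(v_{i,a})$ to the weight of every edge from $v_{i,a}$ into one designated neighboring $\sn$, so that an aligned $H$-subgraph picks up each node weight exactly once. Note that this transfer interacts badly with your choice of labeling super-nodes: once $N^i$ is pushed onto designated edges, the coefficient of $N^i$ in the sum for an arbitrary copy is no longer $|\pi^{-1}(i)|$ but the number of copy-edges leaving $\P_{h_i}$ toward the designated neighbor, which can be $0$ even when $\P_{h_i}$ is occupied, so your base-$N$ counting has to be redone. The paper sidesteps this by placing the label $d^z$ directly on all edges of the $z$-th $\se$ and arguing that each $\se$ must be used exactly once by the copy; combined with $H$ having no isolated vertices this forces one vertex per $\sn$, after which the relabeling argument (your automorphism step) goes through.
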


\def\zth{z^{th}}

\begin{proof}
	Let $G = \hsub{i}{a}{k}$ be an $H$-partite graph with weight function $w : 
	V(G) \cup E(G) \to \mathbb{Z}$ and target weight $0$. In this proof, we will 
	construct a new weight function $w^* : E(G) \to \mathbb{Z}$ and a target $t 
	\in \mathbb{Z}$ such that $G$, $H$, $w^*$, and $t$ make up an instance of 
	$\eweight{H}$. We will build $w^*$ in the following manner. Let $W$ be the 
	maximum weight of a node or edge in the graph, and let $d= W \cdot |E(H)|$. 
	First, initialize $w^*(v_{i,a}, v_{j,b}) = w(v_{i,a}, v_{j,b})$ for all edges 
	in $G$. Then, let $(P_{h_a}, P_{h_b})$ be the $\zth$ $\se$ of $H$. For each 
	edge $(v_{i,a}, v_{j,b})$, we add the integer $d^z$ to $w^*(v_{i,a}, 
	v_{j,b})$. Now, for each $i \in [n]$, pick an arbitrary $j \in [n]$ such that 
	$(h_i,h_j) \in E(H)$, and add $w(v_{i,a})$ to $w^*(v_{i,a},v_{j,b})$ for all 
	$a,b \in [n]$. We set the target $t = \sum_{i=1}^{d} d^i$.

	To prove correctness, let $\chi = \hsub{i}{a}{k}$ be an $H$-subgraph of $G$ of 
	total weight $0$. Since each $\se$ of $G$ is used exactly once by $\chi$, it 
	follows that the sum of the edges of $\chi$ will have total weight $t$ under 
	weight function $w^*$. For the reverse direction, let $S = \{ v_i \}_{i \in 
	[k]}$ be a subgraph isomorphic to $H$ whose edge weights sum to $t$ under 
	weight function $w^*$. Then, each $v_i$ must lie in a distinct $\sn$ of $G$, 
	for otherwise, if the $\jth$ $\sn$ is unoccupied, then the total weight of $S$ 
	cannot possibly sum to $t$. Now, relabel the vertices of $S$ as 
	$\hsub{i}{a}{k}$. Then, the total weight of $S$ under $w^*$ can be expressed 
	as $\sum_{h_i \in V(H)} w(v_{i,a_i}) + \sum_{(h_i,h_j) \in E(H)} w(v_{i,a_i}, 
	v_{j,a_j}) + t$. Therefore, we conclude that the sum of the weights of the 
	nodes and edges of $S$ is $0$, as desired.
\end{proof}

\begin{claim}
	$\eweight{H} \rto{\alpha}{\alpha} \ew{H}$.
\end{claim}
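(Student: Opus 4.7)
The plan is to use color-coding via $k$-perfect hash families (where $k=|V(H)|$) to reduce a general $\eweight{H}$ instance to $\Ot(1)$ structured instances of $\ew{H}$. Given an $\eweight{H}$ input $G$ on $n$ vertices with integer edge weights in $[-M,M]$ and target weight $0$, I enumerate over an explicit $k$-perfect family $\mathcal{F}$ of hash functions $f:V(G)\to[k]$ of size $2^{O(k)}\log n$ \cite{AYZ95} and, for each $f\in\mathcal{F}$ and each bijection $\pi:[k]\to V(H)$, I build one complete $\hpg{H}$ $G^{\pi}_{f}$ to feed into an $\ew{H}$ oracle.

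In $G^{\pi}_{f}$, the super-node $\P_{h_i}$ consists of the vertices $v\in V(G)$ with $\pi(f(v))=h_i$, padded with fresh dummy vertices up to exactly $n$ per super-node. All node weights are set to $0$. For every super-edge $(\P_{h_i},\P_{h_j})$ (i.e., $(h_i,h_j)\in E(H)$) and every pair $u\in\P_{h_i}$, $v\in\P_{h_j}$, I assign edge weight $w^*(u,v)$ equal to the original $G$-weight if $(u,v)\in E(G)$ between two real vertices, and to a fixed ``fake-edge'' weight $L:=(k+|E(H)|)\cdot M+1$ otherwise (in particular whenever either endpoint is a dummy). The target weight stays $0$.

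For correctness, any zero-weight $H$-subgraph in some $G^{\pi}_{f}$ cannot contain a fake edge: a single occurrence of weight $L$ already exceeds in magnitude the sum $<(k+|E(H)|)M$ of all other node and edge weights, so the total cannot vanish. Hence any $\ew{H}$ solution maps $V(H)$ to real vertices of $G$ joined by real edges summing to $0$, giving an $\eweight{H}$ witness. Conversely, if $G$ contains an $H$-subgraph $\{u_1,\dots,u_k\}$ matched to $h_1,\dots,h_k\in V(H)$ with edge weights summing to $0$, then by $k$-perfectness of $\mathcal{F}$ some $f\in\mathcal{F}$ assigns all $u_i$ distinct colors, and the bijection $\pi$ with $\pi(f(u_i))=h_i$ produces an instance $G^{\pi}_{f}$ in which those vertices form a zero-weight $H$-subgraph.

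The total number of oracle calls is $|\mathcal{F}|\cdot k!=\Ot(1)$, each on an $\ew{H}$ instance with $O(n)$ vertices and weights of magnitude $O(nM)$, which only adds a $\polylog(n)$ factor to the encoded input size. So if $\ew{H}$ can be solved in time $O(n^{\alpha-\varepsilon})$, the total running time is $\Ot(n^{\alpha-\varepsilon})=O(n^{\alpha-\delta})$ for any $\delta<\varepsilon$, establishing $\eweight{H}\rto{\alpha}{\alpha}\ew{H}$. The main care-point is the choice of $L$ guaranteeing that fake edges and dummy vertices can never appear in a zero-weight solution; once this is set, everything else is a standard application of color coding.
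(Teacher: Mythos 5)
Your proof is correct and takes essentially the same route as the paper's: both reduce $\eweight{H}$ to $\ew{H}$ by color-coding the vertices of the input graph into the $k$ super-nodes. You are simply more explicit about the derandomization (a $k$-perfect hash family in place of the paper's randomized $1/k^k$-success coloring repeated $O(k^k)$ times) and about completing the instance to a complete $\hpg{H}$ (dummy vertices and large fake-edge weights), details the paper leaves implicit.
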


\begin{proof}
	We will use a simple color coding trick to ensure that the reduction succeeds 
	with probability $1/k^k$. This procedure can be derandomized using standard 
	techniques.

	Let $G^*$ be the graph of an instance of $\eweight{H}$. We construct an 
	$H$-partite graph $G$ in the following manner. For each vertex $v_i \in V(G)$, 
	we will pick a random $j \in [k]$ and put $v_i$ in $\sn$ $\P_{h_j}$. In other 
	words, we maintain the structure of the graph while partitioning the vertices 
	into $k$ parts. This can also be seen analogously as color-coding the vertices 
	using $k$ colors. The graph $G$ (along with the original weight function) is 
	now an instance of $\ew{H}$.

	For correctness, note that if $G$ contains an $H$-subgraph $\chi$, then $\chi$ 
	is an isomorphic copy of $H$ in $G^*$ with probability $1$. For the other 
	direction, we will show that with probability at least $1/k^k$, a set of 
	vertices $\chi = \{v_i \}_{i \in [k]}$ from $G^*$ will form an $H$-subgraph in 
	$G$. For each vertex $v_i$, there is a $1/k$ probability that it is assigned 
	to partition $\P_{h_i}$. Thus, with probability $1/k^k$, this event holds for 
	all $v_i$ for $i \in [k]$, and so $\chi$ is an $H$-subgraph of $G$. To 
	translate this into a reduction, we simply repeat this randomized procedure 
	$O(k^k)$ times.
\end{proof}


\section{Proof of Lemma~\ref{lem:convksum}}
\label{app:conv}

\begin{proof}

Assume $\convksum{k}$ can be solved in time $O(n^{\cftwo{k}-\varepsilon})$, for 
some $\varepsilon >0$. We follow the outline of the proof of Theorem 10 in 
\cite{Pat10} to give an $O(n^{\cftwo{k}-\varepsilon'})$ time algorithm for 
$\ksum{k}$.

We use a hashing scheme due to Dietzfelbinger \cite{Die96} to hash the numbers 
of the $\ksum{n}$ instance to $t$ buckets. In \cite{Die96}, a simple hash family 
$\mathcal{H}_{M,t}$ is given, such that if one picks a function 
$h:[M]\rightarrow [t]$ at random from $\mathcal{H}_{M,t}$, and maps each number 
$x_{i.j}\in L_i$ to bucket $B_{i,h(x_{i,j})}$, the following will hold:

\begin{itemize}
\item(Good load balancing) W.h.p. only $O(kt)$ numbers will be mapped to 
	``overloaded'' buckets, that is, buckets with more than $kn/t$ numbers. 
	Moreover, each number will be hashed to an ``overloaded" bucket with $o(1)$ 
	probability.
\item(Almost linearity) For any $k-1$ buckets 
	$B_{1,a_1},\ldots,B_{k-1,a_{k-1}}$, and any $k-1$ numbers $y_1\in 
	B_{1,a_1},\ldots,y_{k-1}\in B_{k-1,a_{k-1}}$, the number 
	$z=-(y_1+\cdots+y_{k-1})$ can only be mapped to one of certain $k$ buckets 
	(w.p. $1$): $B_{k,a^{(1)}},\ldots,B_{k,a^{(k)}}$, where w.l.o.g. we can assume 
	that $a^{(1)}=\sum_{i=1}^{k-1} a_{i}$, and for $1<i\leq k$, 
	$a^{(i)}=a^{(i-1)}+1$. \end{itemize}

Given a $\ksum{k}$ instance $L_1,\ldots,L_k$, our reduction is as follows:
\begin{enumerate}
\item Repeat the following $c \cdot k^k \cdot \log n$ times.
\begin{enumerate}

\item Pick a hash function $h \in \mathcal{H}_{M,t}$, for $t$ to be set later, and map each 
	number $x_{i,j}$ to bucket $B_{i,h(x_{i,j})}$.

\item Ignore all numbers mapped to ``overloaded'' buckets.
\item Now each bucket has at most $R = kn/t$ numbers. We create $k\cdot R^k$ 
	instances of $\convksum{k}$, one for every choice of numbers 
	$(i_1,\ldots,i_k)\in [R]^k$ and a number $0\leq y < k$, where in each 
	instance, the lists will contain only $t$ numbers. These instances will test 
	all $\ksol{k}$s that might lead to a solution.
For a fixed $(i_1,\ldots,i_k)\in [R]^k$ and $0\leq y < k$, we create $k$ lists 
$L'_1,\ldots,L'_k$ as input for $\convksum{k}$, where for every $j\in [k-1]$, 
$x'_{j,a} \in L'_j$ will be set to the $i_j$-th number of bucket $B_{j,a}$ of 
$L_j$, while $x'_{k,a} \in L'_k$ will be set to the $i_k$-th number of bucket 
$B_{k,a+y}$ of $L_k$.  \end{enumerate}

\end{enumerate}

To see the correctness of the reduction, assume there was a solution to the 
$\ksum{k}$ problem, $\{ x_{j,a_j} \}_{j\in [k]}$, and note that with probability 
$1-O(n^{-c})$, there will be an iteration for which these numbers are not mapped 
to ``overloaded" buckets. Now let $h$ be the hash function in a good iteration, 
$a = \sum_{j=1}^{k-1} h(x_{j,a_j})$, and $y$ be such that $h(x_{k,a_k})=a+y$. 
Note that by the ``almost linearity'' property, such $y \in [k]$ must exist.
Now let $(i_1,\ldots,i_k)\in [R]^k$ be such that for every $j\in [k-1]$, 
$x_{j,a_j}$ is the $i_j$-th element in bucket $B_{j,h(x_{j,a_j})}$ of $L_j$, 
while $x_{k,a_k}$ is the $i_k$-th element in bucket $B_{k,a+y}$. Now consider 
the $\convksum{k}$ instance that we get for these $(i_1,\ldots,i_k)$ and $y$, 
and consider the $\ksol{k}$ $\{ x'_{j,h(x_{j,a_j})} \}_{j \in [k-1]} \cup \{ 
x'_{k,h(x_{k,a_k})-y} \}$. Its sum will be exactly $\sum_{j\in [k]} x_{j,a_j}$, 
since $x'_{j,h(x_{j,a_j})}$ will be set to $x_{i.a_j}$, for every $j\in [k]$. 
And it will satisfy the convolution property, since $ \sum_{j=1}^{k-1} 
h(x_{j,a_j}) = a = h(x_{k,a_k}) - y$. For the other direction, any $\ksol{k}$ in 
any convolution problem is a legitimate $\ksol{k}$ in the original $\ksum{k}$ 
problem with the same sum.
Therefore, with probability $1-o(n^c)$, there is a solution iff one of the 
$\convksum{k}$ instances has a solution. 

The total running time of the reduction is $\Ot(t\cdot n^{\cftwo{k-1}} + 
(n/t)^{\cftwo{k}-\varepsilon})$. Now set $t = n^{\varepsilon}$, and note that 
when $k$ is odd, the first term is insignificant, to get a running time of 
$\Ot(n^{(1-\varepsilon)\cdot(\cftwo{k}-\varepsilon)}) = \Ot(n^{\cftwo{k} - 
\varepsilon'})$, for some $\varepsilon' >0$.


\end{proof}

\end{document}